\newcommand{\gdel}[1]{}
\newcommand{\anadd}[1]{\textcolor{blue!60!black}{#1}}
\newtheorem{prop}{Proposition}
\newtheorem{lemma}{Lemma}
\newcommand{\ie}{i.e., }
\toks@\expandafter{#1#2}%
\def\addto@every@math{%
  \expandafter\addto\csname \expandafter\ifx
  \csname mathoptions@on\endcsname\relax 
  check@mathfonts\else mathoptions@on\fi\endcsname
}
\def\active@def#1{%
  \begingroup\lccode`\~=`#1\relax\lowercase{\endgroup\def~}%
}
\def\fixmathspacing#1#2{%
  \addto@every@math{%
    \catcode`#1=12 \mathcode`#1="8000
    \active@def#1{#2}%
  }%
}
\title[extendible spacetime without CTCs whose every extension has CTCs]{An extendible spacetime without closed timelike curves whose every extension contains closed timelike curves}
\author{H.\ Andr{\'e}ka}
\address{H.\ Andr{\'e}ka, HUN-REN Alfr\'ed R\'enyi Institute of Mathematics, Budapest, Hungary}
\email{andreka.hajnal@renyi.hu}
\author{J.\ Madar{\'a}sz}
\address{J.\ Madar\'asz, HUN-REN Alfr\'ed R\'enyi Institute of Mathematics, Budapest, Hungary}
\email{madarasz.judit@renyi.hu}
\author{J. Manchak}
\address{J. Manchak, Department of Logic and Philosophy of Science, University of California, Irvine, USA}
\email{jmanchak@uci.edu}
\author{I.\ N{\'e}meti}
\address{I.\ N{\'e}meti, HUN-REN Alfr\'ed R\'enyi Institute of Mathematics, Budapest, Hungary}
\email{nemeti.istvan@renyi.hu}
\author{G.\ Sz{\'e}kely}
\address{G.\ Sz\'ekely, HUN-REN Alfr\'ed R\'enyi Institute of Mathematics, Budapest, Hungary  \& University of Public Service, Budapest, Hungary}
\email{szekely.gergely@renyi.hu}
\date{\today}
\begin{document}

\begin{abstract}
By removing a fractal from time-rolled Minkowski spacetime, we construct an extendible spacetime without closed timelike curves whose every extension contains closed timelike curves. This settles a question posed by Geroch. 
\end{abstract}

\maketitle

\section{Introduction}%

Geroch~\cite{G} has emphasized that a suitable definition of a  ``singularity" within the context of general relativity depends crutially on the definition of spacetime maximality. He has also suggested that the latter definition is far from clear given its sensitivity to a background collection of spacetimes. In recent years, there has been significant interest in better understanding this sensitivity  by exploring spacetime maximality outside the standard context, e.g., $C^0$-maximality \cite{GL, S0, S}. In his paper, Geroch articulated a number of ``important and unsolved problems" concerning spacetime maximality -- \ some of which remain open 50+ years later. The present work concerns one such problem.

Let $\mathscr{U}$ be the collection of all spacetimes $(M,g)$ where $M$ is a smooth, connected, Hausdorff manifold and $g$ is a smooth Lorentzian metric on $M$. For any collection $\mathscr{P}\subseteq \mathscr{U}$, let us say that a spacetime in $\mathscr{P}$ is $\mathscr{P}$-maximal if it is not isometric to a proper subset of another spacetime $(M',g')$ in $\mathscr{P}$. We say a spacetime in $\mathscr{P}$ is $\mathscr{P}$-extendible if it is not $\mathscr{P}$-maximal. Now consider the following condition on a collection $\mathscr{P}\subset \mathscr{U}$:\\

\begin{enumerate}
\item[(*)] Every $\mathscr{P}$-maximal spacetime is $\mathscr{U}$-maximal.\\
\end{enumerate} 

Many spacetime properties of interest do not satisfy (*). For instance, if $\mathscr{P}\subset {U}$ is the collection of all globally hyperbolic spacetimes, one can easily find a $\mathscr{P}$-maximal spacetime that is $\mathscr{U}$-extendible: the $t<0$ region of Misner spacetime is one such example \cite{Mi, CI}. This particular example can be used to obtain similar results for a number of other steps on the causal ladder: stable causality, strong causality, and (past and future) distinguishability. The case of causality is handled with another simple example \cite{M}. Consider any $\mathscr{U}$-maximal spacetime $(M,g)$ with a single closed null curve and then remove a point $p$ on this curve. The resulting spacetime $(M-\{p\}, g)$ is causal but it has only one proper extension: the acausal spacetime $(M,g)$. This last step relies on an intuitive but non-trivial result recently proved by Sbierski \cite{S}: If $(M,g)$ is a $\mathscr{U}$-maximal spacetime and $p \in M$, then the only proper $\mathscr{U}$-extension (up to isometry) of $(M-\{p\}, g)$  is $(M,g)$ itself. 

It can also be seen that some spacetime properties trivially satisfy (*): if $\mathscr{P}\subset {U}$ is the collection of all geodesically complete spacetimes, then it is immediate that each spacetime in $\mathscr{P}$ is $\mathscr{U}$-maximal and thus $\mathscr{P}$-maximal \cite{C, BEE}. But the status of (*) with respect to other spacetime properties is sometimes difficult to settle. Geroch ~\cite{G} wondered about four spacetime properties in particular:

\begin{enumerate}
\item[(i)] ``is a source-free solution to Einstein's equations"
\item[(ii)] ``has no closed timelike curves"
\item[(iii)] ``satisfies an energy condition"
\item[(iv)] ``has a Killing vector"
\end{enumerate}

Questions concerning (i) and (iv) are still open as far as we know. Question (iii) was settled by Manchak \cite{M} who showed that (*) is not satisfied by any of the standard energy conditions.

Question (ii) is settled in the present paper. Geroch had special interest in this case, he writes in \cite[p. 277]{G}: 
``In fact, the status of closed timelike curves with respect to condition [(*)]
would have an important bearing on whether or not the program for 
refining the notion of a ``singular point" by using extensions can be 
carried out for the ideal points construction." For some time there was hope for a positive resolution, but 
it turns out that condition (*) is not satisfied by the collection $\mathscr{P}\subset \mathscr{U}$ 
of all spacetimes without closed timelike curves (CTCs). 

In this paper, an example is presented which is a $\mathscr{U}$-extendible spacetime without CTCs whose every extension contains CTCs. The example is a variation of an idea that is often used in the foundations of Lorentzian causality theory: take Minkowski spacetime and ``roll it up" along the time direction and then remove points in an appropriate way \cite{HE, Min}. In our case, the set of removed points needs to be quite intricate; here it is a fractal and at the same time a generalized Cantor set. We note that because of this construction, the example $(M,g)$ should not be considered a ``time machine" since all of the CTCs in every extension $(M',g')$ are confined to the timelike past of $M$ in $M'$ \cite{K1, K2}.

\section{Puncturing time-rolled Minkowski spacetime to ban all CTCs}
\label{sec-B}

In this section we construct the desired spacetime for dimension 2, it will be extended to all dimensions in Section~\ref{sec-h}. 
Time-rolled Minkowski spacetime has plenty of CTCs, we construct a ``barrier" $B$ such that each CTC in time-rolled Minkowski spacetime has at least one point from $B$. Because all extensions of the example need to have CTCs, the set $B$ cannot contain any line segment.
 
The idea is to build an infinite horizontal wall from fractal bricks, each
of which is constructed as the Cantor set; but instead of removing
middle thirds starting from the interval $[0,1]$, we remove sets from
a closed trapezoid having lightlike legs. We do this such that what
remains are three disjoint closed trapezoids each similar to the
original one and whose corresponding legs are on the same lightlike
line. Then we continue this procedure with the small trapezoids ad infinitum. See Figure~\ref{fig-B}.

\begin{figure}
  \begin{center}
    
    \begin{tikzpicture}[scale=0.11]
      \tikzstyle{lumo}=[red,thick]
      \tikzstyle{peco}=[thick,black!50,fill=black!42]
 
      \tikzmath{\x=4.0; \l =2.4; \lx=\l*\x; \m=(3-\l)*\x/2/\l; \lm=\l*\m; \y=\x-2*\lm; \ly=\l*\y;  \llm=\l*\lm;\llx=\l*\lx;\lly=\l*\ly;  \lllm=\l*\llm; \lllx=\l*\llx; \llly=\l*\lly; \gap=\llx+\lly;\t=\llx-\llm;} 

      \newcommand{\barpecetaro}[2]{
        \begin{scope}[shift={(#1,#2)}]
          \draw[peco] (0,0) to (\x,0) to (\x-\m,\m) to (\m,\m)to cycle;
          \draw[peco,shift={(\x+\y,0)}] (0,0) to (\x,0) to (\x-\m,\m) to (\m,\m)to cycle;
          \draw[peco] (2*\x-\lm,\lm) to (\x-\lm,\lm) to (\x-\lm+\m,\lm-\m) to (2*\x-\lm-\m,\lm-\m) to cycle;
        \end{scope}
      }

      \newcommand{\rbarpecetaro}[2]{
        \begin{scope}[shift={(#1,#2)}]
          \draw[peco] (0,0) to (\x,0) to (\x-\m,-\m) to (\m,-\m)to cycle;
          \draw[peco,shift={(\x+\y,0)}] (0,0) to (\x,0) to (\x-\m,-\m) to (\m,-\m)to cycle;
          \draw[peco] (2*\x-\lm,-\lm) to (\x-\lm,-\lm) to (\x-\lm+\m,-\lm+\m) to (2*\x-\lm-\m,-\lm+\m) to cycle;
        \end{scope}
      }


      \begin{scope}[shift={(0,1.5*\lllm)}]
        \draw[peco,shift={(-\gap-\t,0)}] (0,0)  node[black,above left]  {$B^0_0$} to  (\llx,0) to (\llx-\llm,\llm) to (\llm,\llm)  to cycle;
        \draw[peco, shift={(-\t,\llm)}, cm={1,0,0,-1,(0,0)}] (0,0) to node[above,black]  {$B^0_{-1}$}  (\llx,0) to (\llx-\llm,\llm) to (\llm,\llm)to cycle;
        \draw[peco] (0,0) to  (\llx,0) to  (\llx-\llm,\llm) to  node[above,black]  {$B^0_0$}  (\llm,\llm)to cycle;
        \draw[peco, shift={(\t,\llm)}, cm={1,0,0,-1,(0,0)}] (0,0) to node[above,black]  {$B^0_{1}$}  (\llx,0) to (\llx-\llm,\llm) to (\llm,\llm)to cycle;
        \node[gray] at (-\llx+2,\llm/3) {$\ldots$};
        \node[gray] at (\llx+\t+1,\llm/3) {$\ldots$};
        \node[right] at (\llx+\t+3,\llm/2) {$B^0$};
      \end{scope}


      \begin{scope}[shift={(-\t-\gap,0)}]
        \draw[peco] (0,0)  node[black,above left]  {$B^1_0$} to (\lx,0) to (\lx-\lm,\lm) to (\lm,\lm)to cycle;
        \draw[peco,shift={(\ly+\lx,0)}] (0,0) to (\lx,0) to (\lx-\lm,\lm) to (\lm,\lm)to cycle;
        \draw[peco,shift={(\lx-\llm,\llm)}] (0,0) to (\lx,0) to (\lx-\lm,-\lm) to (\lm,-\lm)to cycle;
\end{scope}

      \begin{scope}[shift={(-\t,\llm)}, cm={1,0,0,-1,(0,0)}]
        \node[above,black] at (\llx/2,0) {$B^1_{-1}$};
        \draw[peco] (0,0) to (\lx,0) to (\lx-\lm,\lm) to (\lm,\lm)to cycle;
        \draw[peco,shift={(\ly+\lx,0)}] (0,0) to (\lx,0) to (\lx-\lm,\lm) to (\lm,\lm)to cycle;
        \draw[peco,shift={(\lx-\llm,\llm)}] (0,0) to (\lx,0) to (\lx-\lm,-\lm) to (\lm,-\lm)to cycle;
      \end{scope}

      \node[gray] at (-\llx+2,\llm/3) {$\ldots$};
      
      \node[above,black] at (\llx/2,\llm) {$B^1_{0}$};
      \draw[peco] (0,0) to (\lx,0) to (\lx-\lm,\lm) to (\lm,\lm)to cycle;
      \draw[peco,shift={(\lx-\llm,\llm)}] (0,0) to (\lx,0) to (\lx-\lm,-\lm) to (\lm,-\lm)to cycle;
      \draw[peco,shift={(\ly+\lx,0)}] (0,0) to (\lx,0) to (\lx-\lm,\lm) to (\lm,\lm)to cycle;

      \begin{scope}[shift={(\t,\llm)}, cm={1,0,0,-1,(0,0)}]
        \node[above,black] at (\llx/2,0) {$B^1_{1}$};
        \draw[peco] (0,0) to (\lx,0) to (\lx-\lm,\lm) to (\lm,\lm)to cycle;
        \draw[peco,shift={(\ly+\lx,0)}] (0,0) to (\lx,0) to (\lx-\lm,\lm) to (\lm,\lm)to cycle;
        \draw[peco,shift={(\lx-\llm,\llm)}] (0,0) to (\lx,0) to (\lx-\lm,-\lm) to (\lm,-\lm)to cycle;
        \node[gray] at (\llx+1,2*\llm/3) {$\ldots$};
        \node[right] at (\llx+3,\llm/2) {$B^1$};
      \end{scope}

      \begin{scope}[shift={(0,-1.5*\lllm)}]
        \begin{scope}[shift={(-\gap-\t,0)}]
          \node[above left] at (0,0) {$B^2_0$};
          \barpecetaro{0}{0};
          \rbarpecetaro{\lx-\llm}{\llm};
          \barpecetaro{\lx+\ly}{0};
        \end{scope}

        \node[gray] at (-\llx+2,\llm/3) {$\ldots$};
        
        \begin{scope}[shift={(-\t,\llm)}, cm={1,0,0,-1,(0,0)}]
          \node[above,black] at (\llx/2,0) {$B^2_{-1}$};
          \barpecetaro{0}{0};
          \rbarpecetaro{\lx-\llm}{\llm};
          \barpecetaro{\lx+\ly}{0};
        \end{scope}

        \begin{scope}[shift={(0,0)}]
          \node[above,black] at (\llx/2,\llm) {$B^2_{0}$};
          \barpecetaro{0}{0};
          \rbarpecetaro{\lx-\llm}{\llm};
          \barpecetaro{\lx+\ly}{0};
        \end{scope}

        \begin{scope}[shift={(\t,\llm)}, cm={1,0,0,-1,(0,0)}]
          \node[above,black] at (\llx/2,0) {$B^2_{1}$};
          \barpecetaro{0}{0};
          \rbarpecetaro{\lx-\llm}{\llm};
          \barpecetaro{\lx+\ly}{0};
          \node[gray] at (\llx+1,2*\llm/3) {$\ldots$};
          \node[right] at (\llx+3,\llm/2) {$B^2$};
        \end{scope}
      \end{scope}

      \draw[gray,loosely dashed,very thick] (-\gap/2+\llm/2-\t/2,2.5*\lllm) to (-\gap/2+\llm/2-\t/2,-2.5*\lllm);

      \node[gray] at (-\gap-\t+\llx/2,-2*\lllm) {\Large $\vdots$};
      
      \node[gray] at (\llx/2,-2*\lllm) {\Large $\vdots$};

    \end{tikzpicture}
  \end{center}
   
  \caption{Construction of the barrier $B$ by an infinite iteration.\label{fig-B}}
\end{figure}

To see the condition for self-similarity, let $S$ be the length of the
longer base and $T$ be the height of the trapezoid, and let $\lambda$
be the ratio of similarity, see Figure~\ref{fig-selfsim}. Then,
because the legs of the trapezoids are lightlike, we have
$S=\frac{3}{\lambda}S-2T$, which can be reorganized as
\begin{equation}\label{eq-selfsim}
	2\lambda T=(3-\lambda)S.
\end{equation}
If $\lambda=3$, then $T=0$ and our construction degenerates into a
line segment. If $\lambda=2$, then $S=4T$ and the smaller trapezoids
overlap. However, any of the cases $2<\lambda<3$ are equally good for
our purposes.

To construct the fractal bricks, let us choose and fix some $\lambda$ for which $2<\lambda<3$. 
For concreteness, choose 
$T=1$, then $S=2\lambda\slash(3-\lambda)$.%
\footnote{$S=6$ if $\lambda=2.25$. The figures are constructed with $\lambda=2.4$.}
Let $B^0_0$ be the closed trapezoid with nodes $(0,0), (1,1), (1,S-1)$ and $(0,S)$. 
Then $B_0^0$ has horizontal bases and
lightlike legs such that its height and longer base satisfy equation
\eqref{eq-selfsim}. See Figure~\ref{fig-selfsim}.  

Let $B^1_0$ be the union of the three closed
smaller trapezoids similar to $B^0_0$ placed as in
Figure~\ref{fig-selfsim}, and let $B^{i+1}_0$ be the union of the
$3^{i+1}$ trapezoids that we get by iterating the same replacement
step with all the $3^i$ trapezoids of $B^i_0$. 
Let $m$ be the length
of the midsegment of trapezoid $B^0_0$. For each $n\in\mathbb{Z}$, let
$B^i_{n}$ be $B^i_0$ translated horizontally by $n m$ and rotated
around its center if $n$ is odd. See Figure~\ref{fig-B}.

We now define our barrier set  $B$ as
\begin{equation}\label{eq-csik}
	B^i := \bigcup_{n\in\mathbb{Z}}B^i_n\qquad\mbox{and}\qquad B=\bigcap_{i=0}^{\infty}B^i.
\end{equation}

\begin{figure}
  \begin{center}
    \begin{tikzpicture}[scale=1.1]
      \tikzstyle{lumo}=[red,thick]
      \tikzstyle{koloro}=[gray,fill=gray,opacity=0.21]

      \tikzmath{\x=4.0; \l =2.4; \lx=\l*\x; \m=(3-\l)*\x/2/\l; \lm=\l*\m; \y=\x-2*\lm; \ly=\l*\y;  \llm=\l*\lm;\llx=\l*\lx;\lly=\l*\ly;  \lllm=\l*\llm; \lllx=\l*\llx; \llly=\l*\lly;} 

      \newcommand{\barpecetaro}[2]{
        \begin{scope}[shift={(#1,#2)}]
          \draw[koloro] (0,0) to (\x,0) to (\x-\m,\m) to (\m,\m)to cycle;
          \draw[koloro,shift={(\x+\y,0)}] (0,0) to (\x,0) to (\x-\m,\m) to (\m,\m)to cycle;
          \draw[koloro] (2*\x-\lm,\lm) to (\x-\lm,\lm) to (\x-\lm+\m,\lm-\m) to (2*\x-\lm-\m,\lm-\m) to cycle;
        \end{scope}
      }

      \barpecetaro{0}{0};
      \draw[lumo,shorten >= -15, shorten <=-15] (0,0) to (\lm,\lm);
      \draw[lumo,shorten >= -15, shorten <=-15] (\lx-\lm,\lm) to (\lx,0);
      \draw[lumo,shorten >= -15, shorten <=-15] (\x,0) to (\x-\lm,\lm);
      \draw[lumo,shorten >= -15, shorten <=-15] (\lx-\x,0) to (\lx-\x+\lm,\lm);

      \draw[dotted] (\lm,\lm) to (0,\lm);
      \draw[dotted] (\lx,0) to (\lx,\lm);

      \draw[koloro] (0,0) to (\lm,\lm) to (\lx-\lm,\lm) to (\lx,0) to cycle;

      \draw[decorate,decoration={brace,amplitude=10pt,mirror,raise=2}]
      (0,0) -- (\lx,0) node[midway,yshift=-20]{$S$};

      \draw[decorate,decoration={brace,amplitude=5pt,raise=1.5}]
      (0,0) -- (0,\lm) node[midway,xshift=-10]{$T$};

      \draw [decorate,decoration={brace,amplitude=5pt,raise=3}]
      (\x-\lm,\lm) -- (2*\x-\lm,\lm) node[midway,yshift=20]{$\frac{1}{\lambda}S$};

      \draw[decorate,decoration={brace,amplitude=5pt,raise=3}]
      (\lm,\lm) -- (\x-\lm,\lm) node[midway,yshift=20]{$\frac{1}{\lambda}S-2T$};

      \draw[decorate,decoration={brace,amplitude=5pt,raise=3}]
      (2*\x-\lm,\lm) -- (\lx-\lm,\lm) node[midway,yshift=20]{$\frac{1}{\lambda}S-2T$};

      \draw[decorate,decoration={brace,amplitude=5pt,raise=3}]
      (0,\lm) -- (\lm,\lm) node[midway,yshift=20]{$T$};

      \draw[decorate,decoration={brace,amplitude=5pt,raise=3}]
      (\lx-\lm,\lm) -- (\lx,\lm) node[midway,yshift=20]{$T$};

    \end{tikzpicture}
    \caption{The figure illustrates the main construction step of
      barrier $B$ together with the variables of equation
      \eqref{eq-selfsim} that is used to derive the condition
      $2<\lambda<3$ of self-similarity. The light gray part is the set
      that we remove from the trapezoid, and the three dark gray
      trapezoids are the remaining closed trapezoids which are similar
      to the original one.  Analogously to the construction of the
      Cantor set, we repeat this removal step ad infinitum.}
    \label{fig-selfsim}
  \end{center}
\end{figure}

We note that $B\subseteq [0,1]\times \mathbb{R}$ and
the horizontal projection of $B$ to the vertical interval $[(0,0),(1,0)]$
is almost the Cantor set. The only difference is that the lengths of the removed intervals
here are shorter. This is so because the lengths of the removed
intervals are $(1-2/\lambda)$-th of the original ones, which is less
than one third since $2<\lambda<3$. 
The intersection of $B$ with the horizontal line segment $[(0,0),(0,S)]$ is also almost the Cantor set.

\begin{prop}\label{prop-closed} ${}$
  \begin{enumerate}[label=(\alph*)]
  \item $B$ is closed.
  \item The complement of $B$ is connected.
  \end{enumerate}
\end{prop}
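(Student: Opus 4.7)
For part (a), each trapezoid is closed, so $B^i_n$ is a finite union of closed sets and hence closed; the family $\{B^i_n\}_{n\in\mathbb{Z}}$ is locally finite because each $B^i_n$ lies in a bounded box and the bricks are evenly spaced by $m$, so any compact set meets only finitely many of them. A locally finite union of closed sets is closed, hence each $B^i$ is closed, and $B=\bigcap_i B^i$ is closed as an intersection of closed sets.

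For part (b), I would first establish the preliminary observation that $B^0$ is actually the whole strip $[0,1]\times\mathbb{R}$: consecutive bricks $B^0_n$ and $B^0_{n+1}$ share a full lightlike leg, and their horizontal cross-sections at every $x\in[0,1]$ tile $\mathbb{R}$; hence $\mathbb{R}^2\setminus B^0=\{x<0\}\cup\{x>1\}$ is disconnected and the removals at later iterations must do the work. The plan is to show by induction that $\mathbb{R}^2\setminus B^i$ is connected for each $i\geq 1$, after which connectedness of $\mathbb{R}^2\setminus B=\bigcup_{i\geq 1}(\mathbb{R}^2\setminus B^i)$ follows as an increasing union of connected open sets sharing a common point such as $(-1,0)$. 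For the base case $i=1$, I would verify that $R_0:=B^0_0\setminus B^1_0$ is a connected open region of $B^0_0$ meeting $\{x=0\}$ in the middle gap $\{0\}\times(S/\lambda,S-S/\lambda)$ between the two corner sub-trapezoids and meeting $\{x=1\}$ in the two gaps flanking the rotated middle sub-trapezoid, so $R_0$ alone bridges the two half-planes.

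The inductive step is the technical core and the main obstacle. Given a depth-$i$ sub-trapezoid $T$ with ancestor chain $T\subset T^{(1)}\subset\cdots\subset T^{(i)}=B^0_n$, self-similarity makes $R_T$ a scaled copy of $R_0$ reaching an open middle gap on $T$'s longer base. I would trace up the chain to the smallest index $j\geq 0$ at which $T^{(j)}$ has ``rank $3$'' (its longer base lying on the parent's shorter base). If no such $j$ exists, or if $j=i-1$ so that $T^{(j+1)}=B^0_n$, the middle gap already sits on a strip boundary, and a short outward segment pushes $R_T$ into the strip exterior, which is in $\mathbb{R}^2\setminus B^0\subset\mathbb{R}^2\setminus B^i$. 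Otherwise the gap lies on $T^{(j+1)}$'s interior shorter base; the geometric crux, coming from $\lambda>2$, is that the two sibling sub-trapezoids of $T^{(j+1)}$ inside $T^{(j+2)}$ are separated from $T^{(j+1)}$ by a gap of positive width $h(1-2/\lambda)$, where $h$ is the height of $T^{(j+2)}$, so crossing just outside $T^{(j+1)}$ from the middle gap lands in $R_{T^{(j+2)}}$, which lies in $\mathbb{R}^2\setminus B^{i-j-1}\subset\mathbb{R}^2\setminus B^i$. Either way a short straight segment exhibits $R_T$ as connected to $\mathbb{R}^2\setminus B^i$ within $\mathbb{R}^2\setminus B^{i+1}$, closing the induction. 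The main difficulty is carrying out this case analysis cleanly and verifying the non-overlap claim for all ancestor configurations.
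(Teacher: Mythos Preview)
Your proof of (a) matches the paper's exactly: locally finite union of closed trapezoids is closed, then intersect.

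For (b), you and the paper share the same skeleton --- show $\mathbb{R}^2\setminus B^i$ is connected for each $i\ge 1$, then use that $\mathbb{R}^2\setminus B=\bigcup_{i\ge 1}(\mathbb{R}^2\setminus B^i)$ is an increasing (directed) union of connected sets --- but you work much harder than the paper on the first step. The paper dispatches it in one line: for $i\ge 1$, $B^i$ consists of ``isolated closed trapezoids'', i.e.\ a locally finite family of pairwise \emph{disjoint} compact convex sets, and removing such a family from $\mathbb{R}^2$ leaves a (path-)connected set. The disjointness is exactly the fact $\lambda>2$ you invoke in your ``geometric crux'' (the vertical gap $h(1-2/\lambda)>0$ between the middle sub-trapezoid and the two corner ones, and the analogous gap along the shared leg between adjacent bricks); once you have it, no induction or ancestor-chain bookkeeping is needed.

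Your inductive route is not wrong, but it carries some avoidable overhead and a small imprecision: your $R_n:=B^0_n\setminus B^1_n$ is not quite contained in $\mathbb{R}^2\setminus B^1$, since points on the shared lightlike leg of $B^0_n$ can lie in a corner sub-trapezoid of $B^1_{n\pm 1}$; you would need $B^0_n\setminus B^1$ instead. This is easily repaired and does not affect the bridging argument, but it illustrates why the paper's ``disjoint convex pieces'' viewpoint is cleaner than tracking regions brick by brick.
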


\begin{proof}
  It is easy to see that, for each $i$, the set
  $B^i=\bigcup_{n\in\mathbb{Z}} B_n^i$ is closed because it is the
  union of isolated-enough trapezoids. For a formal proof, $B^i$ is
  the union of the locally finite collection of closed sets $B_n^i$,
  \ie there is a neighborhood of each point in $\mathbb{R}^2$ such
  that it intersects only finitely many of the closed sets
  \cite[Cor.1.1.12]{Eng89}. Then $B$ is closed because it is an
  intersection of closed sets by \eqref{eq-csik}.

  For each $i\ge 1$, the complement of $B^i$ 
  is clearly connected since we just removed some isolated
  closed trapezoids from the plane. 
  The complement of $B$ is the union of the
  complements of  $B^i$. 
  Hence, the complement of $B$ is connected because it is the union of an upward directed collection of connected
  sets.
  %
\end{proof}

The following proposition says that no  causal curve can cross region $[0,1]\times\mathbb{R}$ 
without intersecting $B$. It is here where we use that the legs of the trapezoids are lightlike.

\begin{prop}\label{prop-blocking}
  Assume that $\gamma:[0,1]\to\mathbb{R}^2$ is a broken
  future-directed causal curve such that $\gamma(0)=(0,x)$ and
  $\gamma(1)=(1,y)$ for some $x,y\in\mathbb{R}$. Then there is $0\le
  t\le 1$ such that $\gamma(t)\in B$.
\end{prop}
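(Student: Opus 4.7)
Along a future-directed causal curve in the metric where lightlike means slopes $\pm 1$, the first (time) coordinate is strictly increasing, so I would reparametrize $\gamma$ by its first coordinate $t$ and view it as the graph of a continuous $1$-Lipschitz function $\varphi:[0,1]\to\mathbb{R}$ with $\varphi(0)=x$ and $\varphi(1)=y$. The ``broken'' hypothesis just allows $\varphi$ to have kinks. The goal becomes showing that the graph of $\varphi$ meets $B$.

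The plan is to build a nested sequence of trapezoids $T_0\supseteq T_1\supseteq T_2\supseteq\cdots$, with each $T_i$ one of the constituent trapezoids of $B^i$, together with times $t_i\in[0,1]$, such that $(t_i,\varphi(t_i))$ lies in the long base of $T_i$. Since each $T_i$ has diameter $O(\lambda^{-i})\to 0$, the nested intersection $\bigcap_i T_i$ reduces to a single point $p$; by \eqref{eq-csik}, $p\in\bigcap_i B^i=B$, and since the points $(t_i,\varphi(t_i))$ converge to $p$ along the compact image of $\gamma$, we get $p\in\gamma\cap B$.

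For the base case, the trapezoids $\{B^0_n\}_{n\in\mathbb{Z}}$ tile the strip $[0,1]\times\mathbb{R}$ --- the midsegment length $m$ was chosen precisely so that neighboring $B^0_n$ and $B^0_{n+1}$ share a common lightlike leg --- so $(0,x)$ lies on the long base of some $B^0_n$, which becomes $T_0$ with $t_0=0$. For the inductive step, by self-similarity $T_i$ contains exactly three trapezoids of $B^{i+1}$: two ``corner'' copies whose long bases sit on the two ends of $T_i$'s long base, and one $180^\circ$-rotated copy whose own long base lies on the short-base side of $T_i$ at the opposite time, which I call $t_i'$. If $\varphi(t_i)$ lies in the long base of a corner sub-trapezoid, I take that as $T_{i+1}$ with $t_{i+1}=t_i$. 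Otherwise $\varphi(t_i)$ is in the open middle gap between the two corner sub-trapezoids; there the two inner lightlike legs of the corner sub-trapezoids, extended as null lines $L_3$ (slope $-1$) and $L_4$ (slope $+1$), meet the outer lightlike legs of the inverted sub-trapezoid --- this is exactly the content of equation \eqref{eq-selfsim}. Since $\varphi$ is $1$-Lipschitz, $\varphi-L_3$ and $L_4-\varphi$ are monotone as time runs from $t_i$ toward $t_i'$ and strictly positive at $t_i$, so $\varphi(t_i')$ lies strictly between $L_3(t_i')$ and $L_4(t_i')$, which by \eqref{eq-selfsim} is precisely the long base of the inverted sub-trapezoid. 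I take this as $T_{i+1}$ with $t_{i+1}=t_i'$.

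The heart of the argument is this trapped-wedge step, which rests entirely on the self-similarity equation \eqref{eq-selfsim}: it is exactly this equation that forces the inner legs of the two corner sub-trapezoids to lie on the same pair of null lines as the outer legs of the inverted sub-trapezoid, so the wedge between $L_3$ and $L_4$ opens onto precisely the long base of the inverted sub-trapezoid. That the legs are lightlike is essential --- a $1$-Lipschitz graph can touch but never cross a null line --- which confines $\varphi$ to the wedge and rules out any escape through $T_i$'s own lightlike legs; this is also where the condition $2<\lambda<3$ earns its keep.
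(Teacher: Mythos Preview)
Your argument is essentially the paper's own: build a nested sequence of constituent trapezoids by induction, use that the legs are lightlike to trap the causal curve in each step, and extract the point of $B$ as the limit of the nested intersection. Recasting $\gamma$ as the graph of a $1$-Lipschitz function $\varphi$ is a clean way to encode the causality constraint, and your wedge argument via the null lines $L_3,L_4$ is precisely the paper's treatment of the ``middle'' case, only made more explicit; your verification that $[L_3(t_i'),L_4(t_i')]$ coincides with the long base of the inverted sub-trapezoid is exactly the content of \eqref{eq-selfsim}.

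There is, however, one genuine gap in your base case. You assert that $(0,x)$ lies on the long base of some $B^0_n$, but by construction $B^0_n$ is rotated $180^\circ$ about its center whenever $n$ is odd, so for odd $n$ the long base of $B^0_n$ sits at $t=1$, not at $t=0$. The trace of the tiling on $\{0\}\times\mathbb{R}$ therefore alternates long bases (even $n$) with short bases (odd $n$), and $(0,x)$ may well land in the interior of a short base. The repair is immediate and is exactly what the paper does: if $(0,x)$ lies on the short base of some $B^0_n$, the two lightlike legs of $B^0_n$ bound a wedge opening toward $t=1$, and your own $1$-Lipschitz trapping argument forces $(1,\varphi(1))$ onto the long base of that same $B^0_n$; take $T_0=B^0_n$ with $t_0=1$ in this case. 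With this single correction the induction proceeds unchanged.
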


\begin{proof}
  Let $\gamma$ be a causal curve in Minkowski spacetime $(\mathbb{R}^2,\eta)$ as in the
  statement. The plan of the proof is as follows. Using induction, we
  are going to show that there is an $n\in\mathbb{Z}$ such that, for
  all natural numbers $i$, $\gamma$ intersects one of the trapezoids
  of $B^i_n$ at its longer base. Moreover, if $\gamma$ intersects a
  trapezoid in $B_n^i$, then it intersects a sub-trapezoid in
  $B_n^{i+1}$.  The intersection of these nested closed trapezoids is
  a point $p$ of $B$ to which the intersection points converge. Thus
  $p$ has to be on $\gamma$ because the latter is continuous. This
  will complete the proof of the proposition.

  By definition, $B^0$ is $[0,1]\times\mathbb{R}$, so $\gamma(0)$ is
  on one of the bases of some $B_n^0$. If $\gamma(0)$ is on the
  shorter base of $B_n^0$, then $\gamma(1)$ is on the longer base of
  $B_n^0$, because the legs of $B_n^0$ are lightlike and $\gamma$ is
  broken future-directed causal. See Figure~\ref{figure-PF}. Thus
  either $\gamma(0)=(0,x)$ or $\gamma(1)=(1,y)$ is on the longer base
  of $B^0_n$ for some $n\in\mathbb{Z}$.  This shows the base case
  $i=0$ of the induction.

  Assume now that we have already seen that $\gamma$ intersects one of
  the trapezoids of $B^i_n$ at its longer base.  By symmetry, without
  loss of generality, we can assume that this longer base is at the
  bottom.  Then there are three cases: either $\gamma$ intersects the
  left closed, middle open or the right closed part of this bottom
  base. In the first and third cases, $\gamma$ intersects the longer
  base of the left or the right smaller trapezoid of the next
  iteration step.  In the second case,
  $\gamma$ has to intersect the longer base of the middle trapezoid
  because the common legs of these trapezoids are lightlike.  Hence, $\gamma$ intersects one of the
  included trapezoids of $B^{i+1}_n$ at its longer base, and this is
  what we wanted to show.
  
  Let now $p_i$ be the intersection points of $\gamma$ with the longer
  bases and let $p$ be the unique point in the intersection of the
  nested trapezoids $\gamma$ intersects. Then $p\in B$ by
  definition. Also, the $p_i$ converge to $p$ because each open
  neighbourhood $O$ of $p$ contains one of these nested trapezoids as
  a subset, since their diameters tend to $0$ as $i$ tends to
  infinity. But then $O$ contains all $p_j$, $j\ge i$ for some $i$. We
  have seen that the $p_i$ converge to $p$. Since $\gamma$ is
  continuous, then $\gamma(t)=p$ for some $0\le t\le 1$ because for
  all $i$, we have that $p_i=\gamma(t_i)$ for some $0\le t_i\le 1$.
\end{proof}

\begin{figure}  
	\begin{center}
		\begin{tikzpicture}[scale=0.2,cm={1,0,0,-1,(0,0)}]
			\tikzstyle{lumo}=[red,dashed]
			\tikzstyle{peco}=[black!33,fill=black!33]
			
			\tikzmath{\x=4.0; \l =2.4; \lx=\l*\x; \m=(3-\l)*\x/2/\l; \lm=\l*\m; \y=\x-2*\lm; \ly=\l*\y;  \llm=\l*\lm;\llx=\l*\lx;\lly=\l*\ly;  \lllm=\l*\llm; \lllx=\l*\llx; \llly=\l*\lly;} 
			
			\coordinate (b0) at (\lly+\llx+\lx,-\llm) ;
			\coordinate (a0) at (\lly+\llx+\lx,\lllm+\llm) ;
			\coordinate (a1) at  (\lly+\llx+\lx,0);
			\coordinate (a2) at  (\lly+\llx,0);
			\coordinate (a3) at  (\llx+\lly+\lllm,\lllm);
			\coordinate (a4) at  (\llx+\lly-\lx+\lllm,\lllm);
			\coordinate (a4) at  (\llx+\lly-\lx+\lllm,\lllm);
			\coordinate (a5) at  (\llx+\lly-\lx+\lllm+\llm,\lllm-\llm);
			\coordinate (a6) at  (\ly+\lly+\lllm+\llm,\lllm-\llm);
			\coordinate (a7) at  (\ly+\lly+\lllm+\llm+\llm,\lllm);
			\coordinate (a8) at  (\ly+\lly-\lx+\lllm+\llm+\llm,\lllm);
			\coordinate (a9) at  (\llx,0);
			\coordinate (a10) at  (\llx-\lx,0);
			\coordinate (a11) at  (\llx-\lx+\llm,\llm);
			\coordinate (a12) at  (\lx-\llm,\llm);
			\coordinate (a13) at  (\lx,0);
			\coordinate (a14) at (0,0) circle [radius=.1];
			\coordinate (a15) at  (\lllm,\lllm);
			\coordinate (a16) at  (\lllm-\lx,\lllm);
			\coordinate (a17) at  (\lllm-\lx+\llm,\lllm-\llm);
			\coordinate (a18) at  (\lllm-\lx+\llm-\lx,\lllm-\llm);
			\coordinate (a19) at  (\lllm-\llx+\lx,\lllm);
			\coordinate (a20) at  (\lllm-\llx,\lllm);
			\coordinate (a21) at  (-\lly,0);
			\coordinate (a22) at  (-\lx-\lly,0);
			\coordinate (a23) at  (-\lx-\lly,\lllm+\llm);
			\coordinate (b23) at  (-\lx-\lly,-\llm);
			
			\draw[yellow, fill=yellow]  (a0) to (a1)   to (a2) to (a3) to (a4) to (a5) to (a6) to (a7) to (a8) to (a9) to (a10) to (a11) to (a12) to (a13) to (a14) to (a15) to (a16) to (a17) to (a18) to (a19) to (a20) to (a21) to (a22) to (a23) to cycle;
			

			
			\draw[green, fill=green]  (b0) to (a1)   to (a2) to (a3) to (a4) to (a5) to (a6) to (a7) to (a8) to (a9) to (a10) to (a11) to (a12) to (a13) to (a14) to (a15) to (a16) to (a17) to (a18) to (a19) to (a20) to (a21) to (a22) to (b23) to cycle;
			

			
			\draw  (-\lx-\lly+.5,0)  to (-\lx-\lly-.5,0)   node[left]  {$1$};
			
			\draw  (-\lx-\lly+.5,\lllm)  to (-\lx-\lly-.5,\lllm)   node[left]  {$0$};

			\draw[peco] (0,0) to (\lx,0) to (\lx-\lm,\lm) to (\lm,\lm)to cycle;
			\draw[peco,shift={(\ly+\lx,0)}] (0,0) to (\lx,0) to (\lx-\lm,\lm) to (\lm,\lm)to cycle;
			\draw[peco,shift={(\lx-\llm,\llm)}] (0,0) to (\lx,0) to (\lx-\lm,-\lm) to (\lm,-\lm)to cycle;
			
			\begin{scope}[shift={(\llx-\lllm,\lllm)}]
				\draw[peco] (0,0) to (\lx,0) to (\lx-\lm,-\lm) to (\lm,-\lm)to cycle;
				\draw[peco,shift={(\ly+\lx,0)}] (0,0) to (\lx,0) to (\lx-\lm,-\lm) to (\lm,-\lm)to cycle;
				\draw[peco,shift={(\lx-\llm,-\llm)}] (0,0) to (\lx,0) to (\lx-\lm,\lm) to (\lm,\lm)to cycle;
			\end{scope}
			
			\begin{scope}[shift={(-\llx+\lllm,\lllm)}]
				\draw[peco] (0,0) to (\lx,0) to (\lx-\lm,-\lm) to (\lm,-\lm)to cycle;
				\draw[peco,shift={(\ly+\lx,0)}] (0,0) to (\lx,0) to (\lx-\lm,-\lm) to (\lm,-\lm)to cycle;
				\draw[peco,shift={(\lx-\llm,-\llm)}] (0,0) to (\lx,0) to (\lx-\lm,\lm) to (\lm,\lm)to cycle;
			\end{scope}

			\begin{scope}[shift={(\lly+\llx,0)}]
				\draw[peco] (0,0) to (\lx,0) to (\lx-\lm,\lm) to (\lm,\lm) to cycle;
			\end{scope}
			
			\begin{scope}[shift={(-\lly-\lx,0)}]
				\draw[peco] (0,0) to (\lx,0) to (\lx-\lm,\lm) to (\lm,\lm)to cycle;
			\end{scope}
			
			\draw[lumo] (-\lly-\lllm,\lllm) to (-\lly,0);
			\draw[lumo] (\llx+\lly+\lllm,\lllm) to (\llx+\lly,0);
			\draw[lumo] (0,0) to (\lllm,\lllm);
			\draw[lumo] (\llx,0) to (\llx-\lllm,\lllm);
			
		\end{tikzpicture}
		\caption{No causal curve can cross region $[0,1]\times\mathbb{R}$
			without intersecting $B$, cf.\ Proposition~\ref{prop-blocking}
			below.  \label{figure-PF}}
	\end{center}
\end{figure}
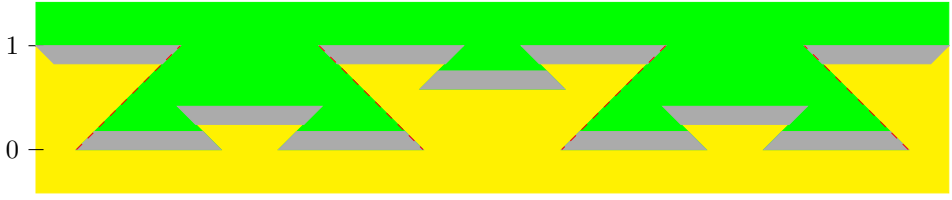

By Proposition~\ref{prop-closed}, we have that $(\mathbb{R}^2\setminus
B,\eta)$ is a spacetime. We can roll it up by choosing a large enough
$\ell\in\mathbb{Z}$ and gluing points $(-\ell,x)$ and $(\ell,x)$
together, for all $x\in\mathbb{R}$. For concreteness, choose an $\ell > \lambda\slash(3-\lambda)
= S/2$ (cf.\ equation~\eqref{eq-selfsim}) and let
$\mathbf{M^-}=(M^-,\eta)$ be the spacetime that we get this way. By
Proposition~\ref{prop-blocking}, there are no closed timelike curves
(CTCs) in $\mathbf{M^-}$. Clearly, $\mathbf{M^-}$ is extendible,
because, e.g., Minkowski spacetime rolled up at the same $\ell$ is an
extension of $\mathbf{M^-}$. We can call $\mathbf{M^-}$
\emph{punctured time-rolled Minkowski spacetime}.

In the following, we state three lemmas about $B$ that we will use in
the next section when proving that each proper extension of
$\mathbf{M^-}$ does have a CTC. In the rest of this section, we will
work in $(\mathbb{R}^2,\eta)$.
\bigskip

Because the diameters of the trapezoids tend to 0, every point of $B$
can be described by a \emph{choice sequence} that starts with an
integer and continues with an infinite series of decision of
$\mathcal{L}$eft, $\mathcal{M}$iddle, $\mathcal{R}$ight. In other
words, there is a one-to-one correspondence between points of $B$ and
$\mathbb{Z}\times\{\mathcal{L},\mathcal{M},\mathcal{R}\}^\omega$.

We call a point $e\in B$ \emph{eventually middle} if{}f its choice
sequence contains only finitely many $\mathcal{L}$ and $\mathcal{R}$
choices, in other words, it becomes constant $\mathcal{M}$ after some
time.

\begin{lemma}\label{lem-em-dense}
  The set of eventually middle points is everywhere
  dense in $B$, \ie for every $\varepsilon >0$ and for every $p\in B$
  there is an eventually middle point $e$ such that
  $|p-e|<\varepsilon$.
\end{lemma}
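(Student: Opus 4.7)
The statement is essentially immediate from the choice-sequence description of $B$ given just above the lemma. Given $p\in B$ with choice sequence $(n,c_1,c_2,\ldots)\in\mathbb{Z}\times\{\mathcal{L},\mathcal{M},\mathcal{R}\}^\omega$, I would consider the nested sequence of trapezoids $T_0\supseteq T_1\supseteq T_2\supseteq\cdots$ determined by its prefixes: $T_0$ is the copy $B^0_n$ selected by the integer $n$, and $T_{i+1}$ is the left, middle, or right sub-trapezoid of $T_i$ picked out by $c_{i+1}$. By construction $\{p\}=\bigcap_{i} T_i$.

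Each sub-trapezoid is a scaled copy of its parent with ratio $1/\lambda<1$, so $\operatorname{diam}(T_i)=\lambda^{-i}\operatorname{diam}(T_0)\to 0$ as $i\to\infty$. Given $\varepsilon>0$, the plan is to choose $i$ large enough that $\operatorname{diam}(T_i)<\varepsilon$, and then let $e\in B$ be the point with choice sequence $(n,c_1,\ldots,c_i,\mathcal{M},\mathcal{M},\mathcal{M},\ldots)$. By definition $e$ is eventually middle, and since its sequence agrees with that of $p$ in the first $i$ coordinates, both $p$ and $e$ lie in $T_i$, so $|p-e|\le\operatorname{diam}(T_i)<\varepsilon$.

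There is no real obstacle here; the argument rests only on the bijection between $B$ and $\mathbb{Z}\times\{\mathcal{L},\mathcal{M},\mathcal{R}\}^\omega$ (asserted in the excerpt and implicit in the nested-trapezoids reasoning of the proof of Proposition~\ref{prop-blocking}) together with the geometric contraction by the factor $1/\lambda$, which follows from the self-similarity relation \eqref{eq-selfsim} with $\lambda>2>1$.
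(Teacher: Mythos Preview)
Your proof is correct and is essentially the same argument as the paper's, only spelled out in more detail: the paper simply observes that every point of $B$ lies in an arbitrarily small trapezoid of the construction and that each such trapezoid contains eventually middle points, which is exactly your ``truncate the choice sequence at stage $i$ and append $\mathcal{M}$'s'' construction together with $\operatorname{diam}(T_i)\to 0$.
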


\begin{proof}
  This follows easily from the construction because every point \anadd{of $B$} is in a
  small enough closed trapezoid of the construction and every such
  trapezoid contains (infinitely many) eventually middle points.
\end{proof}

The statement of the next lemma is illustrated in Figure~\ref{fig-M0}.

\begin{lemma}\label{lem-CL}
  Let $e$ be an eventually middle point. Inside the smallest light
  rhombus $R$ containing the trapezoid after which we always choose
  $\mathcal{M}$ to reach $e$, the vertical centerline of $R$ intersects
  $B$ only in $e$.
\end{lemma}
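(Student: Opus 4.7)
The plan is to split the vertical centerline of $R$ into its portion inside the trapezoid $\tau$ (the one after which all choices are $\mathcal{M}$, at some level $k$) and its portion outside $\tau$ but inside $R$, then handle each separately.

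First note that $e$ itself lies on the vertical centerline: every iterated middle sub-trapezoid of $\tau$ is centered, in the longer-base direction, on the midpoint of $\tau$'s longer base, so the nested intersection $\{e\}$ lies on the perpendicular to $\tau$'s longer base through its midpoint --- which is precisely the vertical centerline of $R$. For the portion inside $\tau$, I would argue inductively that at each level of subdivision the $\mathcal{L}$ and $\mathcal{R}$ sub-trapezoids fail to intersect the centerline: their longer-base-direction centers are displaced from the midpoint of the parent's longer base by $S^{(j)}(\lambda-1)/(2\lambda)$, which exceeds half of the sub-trapezoid's own longer-base length exactly when $\lambda > 2$. Hence the centerline passes only through the middle sub-trapezoid at each level, and the intersection of these nested middles is $\{e\}$.

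For the portion outside $\tau$ but inside $R$: any piece lying outside the strip $[0,1] \times \mathbb{R}$ is trivially free of $B$. A hypothetical $B$-point on the rest would have to lie in a level-$k$ sub-trapezoid $\tau' \ne \tau$ of some $B^0_n$. Tracing the ancestry chain $\tau^{(0)} \supset \cdots \supset \tau^{(k)} = \tau$, such a $\tau'$ shares its first $j$ choices with $\tau$ for some $j < k$ and then diverges to a sibling of $\tau^{(j+1)}$; depending on where the center of $\tau$'s longer base sits in $\tau^{(j)}$, either this divergent branch has longer-base-direction extent missing the midpoint of $\tau$'s longer base (in which case $\tau'$ cannot reach the centerline), or else it reaches that height but one must then show that the branch, and hence $\tau'$, has its height-direction extent disjoint from the centerline's extent --- the required separation being forced by the gap (proportional to $\lambda - 2$) that the self-similar construction leaves between adjacent sub-trapezoids of each parent. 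Finally, the two vertices of $R$ at the ends of the centerline lie on lightlike lines coinciding with legs of sibling trapezoids at ancestor levels (because the four sides of $R$ are exactly the lightlike extensions of $\tau$'s legs), but these vertex points belong to no sub-trapezoid one level deeper and hence are not in $B$. The main technical obstacle is precisely this case analysis: for $\lambda$ close to $3$ the rhombus $R$ extends well beyond $\tau^{(k-1)}$, genuine cousins of $\tau$ whose longer-base-direction extent does contain the midpoint of $\tau$'s longer base do exist, and ruling them out requires genuinely invoking the height-direction separation, which in turn requires carefully tracking $\tau$'s position relative to each of its ancestors along the chain.
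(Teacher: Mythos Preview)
Your decomposition into the portion inside $\tau$ and the portion in $R\setminus\tau$ matches the paper's. For the inside piece the arguments differ only in packaging: you run a level-by-level induction (the $\mathcal{L}$ and $\mathcal{R}$ children miss the centerline because $\lambda>2$), while the paper instead observes that a single homogeneous dilation with center $e$ carries $B\cap\tau$ onto $B\cap\tau_m$, and confirms by an explicit coordinate computation that $e$ really is the fixed point of this map. Once the one ``shell'' of centerline between the bases of $\tau$ and of $\tau_m$ is seen to avoid $B$, iterating the dilation disposes of all deeper levels at once. The two arguments are equivalent in content; the dilation framing is a bit slicker because it makes the role of $e$ as the unique fixed point transparent without an inductive bookkeeping step.

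For the portion in $R\setminus\tau$ you actually do more than the paper, which writes only ``by the construction \dots\ for checking see'' the figure and moves on. Your ancestry case analysis is a legitimate way to make this step rigorous, and the obstacle you identify --- that for $\lambda$ close to $3$ the rhombus $R$ reaches far beyond $\tau^{(k-1)}$, so cousins whose $x$-extent covers the centerline must be excluded via the height-direction gaps --- is real and is not spelled out in the published proof either. So on this point your proposal is more careful than the paper, even though you (reasonably) leave the full case analysis as a sketch.
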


\begin{figure}
  \begin{tikzpicture}[scale=0.35]
    \tikzstyle{lumo}=[red,thick]
    \tikzstyle{peco}=[black!50,fill=black!50,opacity=0.35]
    \tikzstyle{verda}=[green!84!black] 

    \tikzmath{\x=4.0; \l =2.4; \lx=\l*\x; \m=(3-\l)*\x/2/\l; \lm=\l*\m; \y=\x-2*\lm; \ly=\l*\y;  \llm=\l*\lm;\llx=\l*\lx;\lly=\l*\ly;  \lllm=\l*\llm; \lllx=\l*\llx; \llly=\l*\lly;} 

    \newcommand{\barpecetaro}[2]{
      \begin{scope}[shift={(#1,#2)}]
        \draw[peco] (0,0) to (\x,0) to (\x-\m,\m) to (\m,\m)to cycle;
        \draw[peco,shift={(\x+\y,0)}] (0,0) to (\x,0) to (\x-\m,\m) to (\m,\m)to cycle;
        \draw[peco] (2*\x-\lm,\lm) to (\x-\lm,\lm) to (\x-\lm+\m,\lm-\m) to (2*\x-\lm-\m,\lm-\m) to cycle;
      \end{scope}
    }

    \newcommand{\rbarpecetaro}[2]{
      \begin{scope}[shift={(#1,#2)}]
        \draw[peco] (0,0) to (\x,0) to (\x-\m,-\m) to (\m,-\m)to cycle;
        \draw[peco,shift={(\x+\y,0)}] (0,0) to (\x,0) to (\x-\m,-\m) to (\m,-\m)to cycle;
        \draw[peco] (2*\x-\lm,-\lm) to (\x-\lm,-\lm) to (\x-\lm+\m,-\lm+\m) to (2*\x-\lm-\m,-\lm+\m) to cycle;
      \end{scope}
    }

    \barpecetaro{0}{0};
    \rbarpecetaro{\lx-\llm}{\llm};
    \barpecetaro{\lx+\ly}{0};

    \draw[peco,opacity=0.25] (0,0) to (\llx,0) to (\llx-\llm,\llm) to (\llm,\llm)to cycle;

    \coordinate (x) at (\llm*3.3054,0.5833*\llm);
    \coordinate (e) at (\llx/2,\llm*0.706);
    \coordinate (A) at (\llx/2,\llx/2);
    \coordinate (B) at (\llx,0);
    \coordinate (C) at (\llx/2,-\llx/2) ;

    \draw[verda,very thick] (A) to (e) to (C);

    \draw[lumo] (0,0) to node[black, above left]{$R$} (A) to (B) to
    (B) to (C) to cycle;

    \draw[fill] (e) node[below left] {$e$} circle [radius=.1];
    \draw[lumo,thin] (\lx-\llm+\x-\lm, \llm-\lm) to (\lx-\llm+\x-\lm+\x/2,\llm-\lm+\x/2) to (\lx-\llm+2*\x-\lm, \llm-\lm) to (\lx-\llm+\x-\lm+\x/2,\llm-\lm-\x/2) to cycle;
    
  \end{tikzpicture}
    \caption{Inside the smallest light rhombus containing a trapezoid
      used in the construction of $B$, the vertical centerline
      intersects $B$ only in one point, which is an eventually middle
      point.
    \label{fig-M0}}
\end{figure}

\begin{proof}
  This follows by self-similarity, see Figure~\ref{fig-M0}. In more
  detail, let $e$ be an eventually middle point and let the big
  grey trapezoid of the figure illustrate the one after which we always
  choose $\mathcal{M}$ to reach $e$. Let $R$ be the smallest rhombus with lightlike sides which contains this trapezoid. 
  By the construction, the parts of
  the green vertical center-line-segment of $R$ which are outside the big
  trapezoid is outside $B$, for checking see Figure~\ref{figure-PF}.  Similarly, by the construction, the
  parts connecting the midpoints of the corresponding bases of the big
  and small trapezoids are outside $B$, for checking see Figure~\ref{fig-M0}. The part of $B$ covered by
  the big trapezoid and the part of $B$ covered by the small trapezoid
  in the middle are similar and can be transformed into each other by a
  homogeneous dilation with center $e$. This can be seen as follows.
  For $B^0_0$, using the
    summing formula of geometric series, we get
   $e=\left(S/2,\left(1-\frac{1}{\lambda}\right)\left(1+\frac{1}{\lambda^2}+\frac{1}{\lambda^4}+\dots\right)\right)=\left(\frac{\lambda}{3-\lambda},\frac{\lambda}{\lambda+1}\right)$. The
    bottom left corner of the middle trapezoid is
    $\left(\frac{S}{\lambda}-1+\frac{S}{\lambda^2}-\frac{1}{\lambda},
    1-\frac{1}{\lambda}\right)=
    \left(\frac{\lambda-1}{\lambda}\frac{\lambda+1}{3-\lambda},
    \frac{\lambda-1}{\lambda}\right) =\frac{\lambda^2-1}{\lambda^2}\cdot
    e$. This calculation confirms that the center of the homogeneous
    dilation is indeed $e$.  
    Hence, except $e$, every point of the
  vertical center-line-segment of $R$ is outside $B$.
\end{proof}

The statement of the next lemma is illustrated in Figures~\ref{fig-EM}, \ref{fig-M}.

\begin{lemma}\label{lem-em-intersection}
  For every eventually middle point $e\in B$, there are timelike
  curves $\tau,\tau' :[0,1]\to \mathbb{R}^2$ such that
  \begin{enumerate}[label=(\alph*)]
  \item  $\tau(1)=\tau'(1)=e$, $\tau(0)=(-\ell,x)$,
    $\tau'(0)=(\ell,x)$ for some $x\in \mathbb{R}$,
  \item the ranges of $\tau$ and $\tau'$ become part of the
    vertical line through $e$ after a while, and
  \item both $\tau$ and $\tau'$ intersect $B$ only in $e$.
  \end{enumerate}
  Moreover, there
  are $r_n,t_n\in (0,1)$ tending to $1$ when $n$ tends to infinity,
  and there are curves $\lambda_n\subset \mathbb{R}^2\setminus B$
  connecting $\tau'(r_n)$ and $\tau(t_n)$ such that the Euclidean
  length of $\lambda_n$ tends to $0$ when $n$ tends to infinity.
\end{lemma}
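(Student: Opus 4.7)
Take $x=s_e$ and let $R$ be the smallest light rhombus containing the trapezoid after which the choice sequence of $e$ is always $\mathcal{M}$, as in Lemma~\ref{lem-CL}; then the vertical line $L_e=\{(t,s_e):t\in\mathbb{R}\}$ satisfies $L_e\cap R\cap B=\{e\}$. I would define $\tau$ to contain the vertical segment of $L_e$ from the bottom vertex $p^-$ of $R$ up to $e$, and $\tau'$ to contain the vertical segment of $L_e$ from the top vertex $p^+$ of $R$ down to $e$. Both pieces are timelike (vertical lines in $(\mathbb{R}^2,\eta)$ are timelike) and, by Lemma~\ref{lem-CL}, meet $B$ only at $e$, which gives properties (b) and (c) near $e$.

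Outside the barrier strip $[0,1]\times\mathbb{R}$ the line $L_e$ misses $B$ entirely since $B\subseteq[0,1]\times\mathbb{R}$, so I would continue $\tau$ vertically along $L_e$ through $\{t<0\}$ to $(-\ell,s_e)$ and $\tau'$ through $\{t>1\}$ to $(\ell,s_e)$. Inside the strip but outside $R$, the line $L_e$ may hit $B$ at further points $e_*$; each such $e_*$ is itself an eventually middle point (it is the limit of a nested chain of sub-trapezoids in $B$ all traversed by $L_e$, whose tail is forced to be all-$\mathcal{M}$ because the left and right sub-trapezoids of any trapezoid do not contain the vertical axis of the parent), so Lemma~\ref{lem-CL} applied to $e_*$ places $e_*$ inside its own small light rhombus on which $L_e\cap B=\{e_*\}$. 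Near each such $e_*$ I would replace the straight segment of $\tau$ by a small timelike bump: deviate from $L_e$ to $s=s_e\pm\delta$ with $\delta$ chosen so that the bump lies inside the ``removed region'' of the enclosing sub-trapezoid $T_j$ at the appropriate level, i.e.\ the part of $T_j$ outside its three level-$(j+1)$ sub-trapezoids. Such a $\delta$ exists by the self-similarity identity $2\lambda T_j=(3-\lambda)S_j$ together with $\lambda\in(2,3)$, which ensures that at heights lying strictly between the left-right sub-trapezoids and the middle sub-trapezoid the full width of $T_j$ is available as removed region; the bump is made timelike by choosing its $t$-extent larger than $\delta$.

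For the ``moreover'' statement, parametrize so that for each $n$, $\tau(t_n)$ and $\tau'(r_n)$ lie on $L_e$ just below and just above the deeply nested middle trapezoid $T_{k+n+1}$ of $e$ (so $t_n,r_n\to 1$). Connect them by a rectangular path $\lambda_n\subseteq\mathbb{R}^2\setminus B$ going horizontally from $\tau(t_n)$ to $(\,\cdot\,,s_e-\delta_n)$, vertically past $T_{k+n+1}$ at $s=s_e-\delta_n$, and horizontally back to $\tau'(r_n)$, where $\delta_n$ is chosen in the removed region of the surrounding trapezoid $T_{k+n}$ as above. The same removed-region argument yields $\lambda_n\subseteq\mathbb{R}^2\setminus B$, and the Euclidean length of $\lambda_n$ is $O(\delta_n+T_{k+n+1})=O(\lambda^{-n})\to 0$.

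The main technical obstacle is that the set of bump detours on $\tau,\tau'$ outside $R$ may be infinite (or even uncountable), so they must be installed in a compatible way while keeping the curves timelike and in $\mathbb{R}^2\setminus B$. This is handled by organizing the bumps by the level $j$ of the corresponding enclosing trapezoid $T_j$: bumps at level $j$ have both $s$- and $t$-extent of order $\lambda^{-j}$, so bumps at different levels have disjoint $t$-supports and the total $s$-displacement along each curve is controlled by a convergent geometric series.
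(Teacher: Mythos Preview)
Your proposal has a genuine gap in the passage from $R$ back to $(\pm\ell,x)$. The key claim --- that every further point $e_*\in L_e\cap B$ outside $R$ is itself eventually middle --- is not established by the argument you give. The fact that ``the left and right sub-trapezoids of any trapezoid do not contain the vertical axis of the parent'' only constrains $L_e$ when $L_e$ \emph{is} that axis; but $L_e$ is the centerline of the trapezoid marking the start of $e$'s all-$\mathcal{M}$ tail, not of the higher-level trapezoids it traverses, and certainly not of the neighbouring brick $B^0_{n\pm1}$ that $L_e$ can enter after leaving $R$ (recall that adjacent bricks share a lightlike leg, and the top vertex of $R$ sits on exactly that leg). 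Inside such a neighbour, $L_e$ is at a generic relative position, and the nested sub-trapezoids it threads need not have an eventually-$\mathcal{M}$ choice sequence. Nothing then rules out $L_e\cap B$ being uncountable on that side, in which case no scheme of isolated timelike ``bumps'' --- organised by level or otherwise --- can produce a single timelike curve avoiding $B$.

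The paper avoids this difficulty by a different construction. It does \emph{not} try to follow $L_e$ through the whole strip. Instead it takes $x$ to be the spatial coordinate of the centreline of the outer brick $B^0_n$ (so generally $x\neq s_e$) and proves, by a short induction on the level, that every point on the longer base of every sub-trapezoid of $B^0_n$ is reachable from $(-\ell,x)$ (resp.\ $(\ell,x)$) by a timelike curve meeting $B$ at most in that endpoint; the inductive step only uses that from a non-$B$ point of a longer base one can reach the longer base of the next-level middle sub-trapezoid through the removed region. One then applies this to the foot of the centreline of the trapezoid where $e$'s all-$\mathcal{M}$ tail begins and concatenates with that centreline up to $e$, invoking Lemma~\ref{lem-CL} only for this final vertical stretch. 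Your treatment of the $\lambda_n$ (rectangular detours in the removed region, scaled by $\lambda^{-n}$) is fine and matches the paper's self-similarity argument.
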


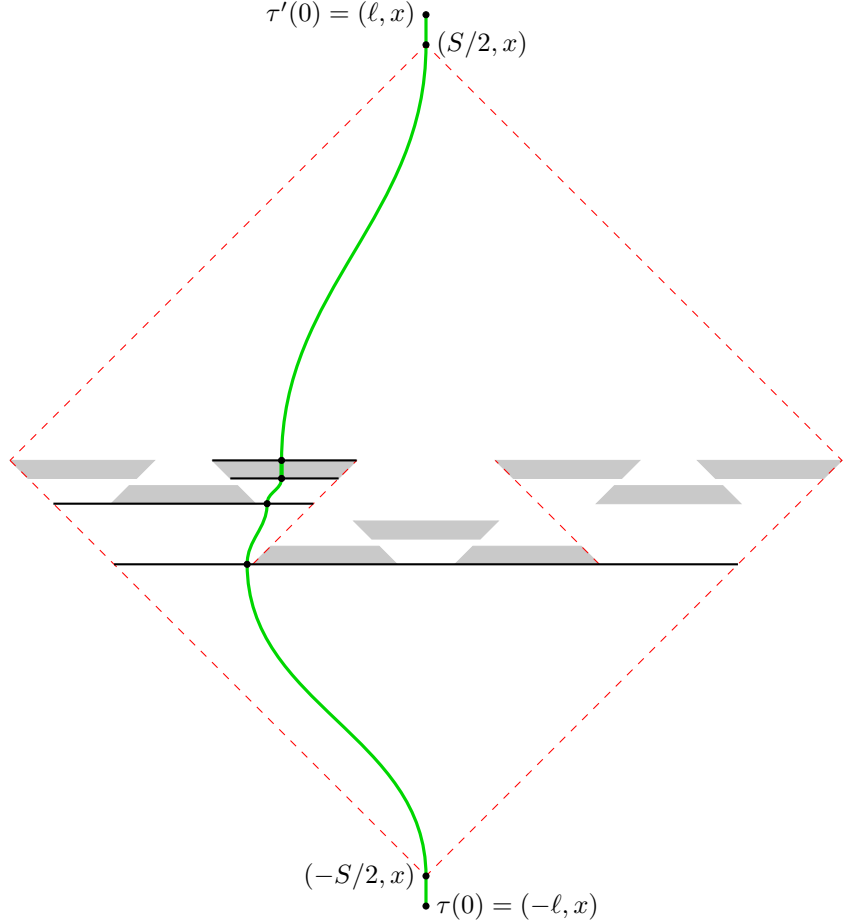
\begin{figure}
  \begin{tikzpicture}[scale=0.2]
    \tikzstyle{lumo}=[red,dashed]
    \tikzstyle{peco}=[black!21,fill=black!21]
    \tikzstyle{verda}=[green!84!black] 

    \tikzmath{\x=4.0; \l =2.4; \lx=\l*\x; \m=(3-\l)*\x/2/\l; \lm=\l*\m; \y=\x-2*\lm; \ly=\l*\y;  \llm=\l*\lm;\llx=\l*\lx;\lly=\l*\ly;  \lllm=\l*\llm; \lllx=\l*\llx; \llly=\l*\lly;}

    \draw[peco] (0,0) to (\lx,0) to (\lx-\lm,\lm) to (\lm,\lm)to cycle;
    \draw[peco,shift={(\ly+\lx,0)}] (0,0) to (\lx,0) to (\lx-\lm,\lm) to (\lm,\lm)to cycle;
    \draw[peco,shift={(\lx-\llm,\llm)}] (0,0) to (\lx,0) to (\lx-\lm,-\lm) to (\lm,-\lm)to cycle;

    \begin{scope}[shift={(\llx-\lllm,\lllm)}]
      \draw[peco] (0,0) to (\lx,0) to (\lx-\lm,-\lm) to (\lm,-\lm)to cycle;
      \draw[peco,shift={(\ly+\lx,0)}] (0,0) to (\lx,0) to (\lx-\lm,-\lm) to (\lm,-\lm)to cycle;
      \draw[peco,shift={(\lx-\llm,-\llm)}] (0,0) to (\lx,0) to (\lx-\lm,\lm) to (\lm,\lm)to cycle;
    \end{scope}

    \begin{scope}[shift={(-\llx+\lllm,\lllm)}]
      \draw[peco] (0,0) to (\lx,0) to (\lx-\lm,-\lm) to (\lm,-\lm)to cycle;
      \draw[peco,shift={(\ly+\lx,0)}] (0,0) to (\lx,0) to (\lx-\lm,-\lm) to (\lm,-\lm)to cycle;
      \draw[peco,shift={(\lx-\llm,-\llm)}] (0,0) to (\lx,0) to (\lx-\lm,\lm) to (\lm,\lm)to cycle;
    \end{scope}

    \draw[lumo] (-\lly-\lllm,\lllm) to (-\lly-\lllm+\lllx/2,\lllm-\lllx/2) to   (\llx+\lly+\lllm,\lllm) to  (-\lly-\lllm+\lllx/2,\lllm+\lllx/2) to cycle;
    \draw[lumo] (0,0) to (\lllm,\lllm);
    \draw[lumo] (\llx,0) to (\llx-\lllm,\lllm);

    \draw[thick] (-\lx+\lllm,\lllm) to (\lllm,\lllm);
    \draw[thick] (-\lly,0) to (\llx+\lly,0);
    \draw[thick] (-\lly-\lllm+\llm,\lllm-\llm) to (\ly+\ly-\lly+\lx-\lllm+\llm,\lllm-\llm);
    \draw[thick] (\lllm-\lx+\lm,\lllm-\lm) to (\lllm-\lm,\lllm-\lm);

    \coordinate (0) at (\llx/2,-\lllx/2+\lllm);
    \coordinate (1) at  (-\lly/25,0);
    \coordinate (2) at (\lllm-\llm-4*\ly/5,\lllm-\llm);
    \coordinate (3) at (\ly/2,\lllm-\lm);
    \coordinate (4) at (\ly/2,\lllm);
    \coordinate (5) at (\llx/2,\lllx/2+\lllm);

    \draw[verda, very thick] (0) to ([shift={(0,-2)}]0) node[right,black] {$\tau(0)=(-\ell,x)$};
    \draw[verda, very thick] (0) node[left,black] {$(-S/2,x)$} to [out=90,in=-90] (1) to [out=90,in=-90] (2)  to [out=90,in=-90] (3)  (4)  to [out=90,in=-90] (5) node[right,black] {$(S/2,x)$};

    \draw[verda, very thick] (5) to ([shift={(0,2)}]5) node[left,black] {$\tau'(0)=(\ell,x)$};
    
    \draw[verda, ultra thick] (3) to (4);

    \draw[fill] ([shift={(0,-2)}]0) circle [radius=.2];
    \draw[fill] (0) circle [radius=.2];
    \draw[fill] (1) circle [radius=.2];
    \draw[fill] (2) circle [radius=.2];
    \draw[fill] (3) circle [radius=.2];
    \draw[fill] (4) circle [radius=.2];
    \draw[fill] (5) circle [radius=.2];
    \draw[fill] ([shift={(0,2)}]5) circle [radius=.2];
    
  \end{tikzpicture}
  \caption{Through every eventually middle point $e$ there are timelike
    curves $\tau$ and $\tau'$ respectively connecting $e$ to points
    $(-\ell,x)$ and $(\ell,x)$ for some $x\in\mathbb{R}$ such that these
    curves contain no other point of $B$ apart from $e$,
    cf.\ Lemma~\ref{lem-em-intersection}. \label{fig-EM} }
\end{figure}

\begin{figure}
  \begin{tikzpicture}[scale=0.22]
    \tikzstyle{lumo}=[red,thick]
    \tikzstyle{peco}=[black!21,fill=black!21]
    \tikzstyle{verda}=[green!84!black] 

    \tikzmath{\x=4.0; \l =2.4; \lx=\l*\x; \m=(3-\l)*\x/2/\l; \lm=\l*\m; \y=\x-2*\lm; \ly=\l*\y;  \llm=\l*\lm;\llx=\l*\lx;\lly=\l*\ly;  \lllm=\l*\llm; \lllx=\l*\llx; \llly=\l*\lly;} 

    \newcommand{\barpecetaro}[2]{
      \begin{scope}[shift={(#1,#2)}]
        \draw[peco] (0,0) to (\x,0) to (\x-\m,\m) to (\m,\m)to cycle;
        \draw[peco,shift={(\x+\y,0)}] (0,0) to (\x,0) to (\x-\m,\m) to (\m,\m)to cycle;
        \draw[peco] (2*\x-\lm,\lm) to (\x-\lm,\lm) to (\x-\lm+\m,\lm-\m) to (2*\x-\lm-\m,\lm-\m) to cycle;
      \end{scope}
    }

    \newcommand{\rbarpecetaro}[2]{
      \begin{scope}[shift={(#1,#2)}]
        \draw[peco] (0,0) to (\x,0) to (\x-\m,-\m) to (\m,-\m)to cycle;
        \draw[peco,shift={(\x+\y,0)}] (0,0) to (\x,0) to (\x-\m,-\m) to (\m,-\m)to cycle;
        \draw[peco] (2*\x-\lm,-\lm) to (\x-\lm,-\lm) to (\x-\lm+\m,-\lm+\m) to (2*\x-\lm-\m,-\lm+\m) to cycle;
      \end{scope}
    }

    \begin{scope}[shift={(-\llx-\lly,0)}]
      \rbarpecetaro{\llx-\lllm}{\lllm};
      \barpecetaro{\llx+\lx-\lllm-\llm}{\lllm-\llm};
      \rbarpecetaro{\llx-\lllm+\lx+\ly}{\lllm};
    \end{scope}

    \barpecetaro{0}{0};
    \rbarpecetaro{\lx-\llm}{\llm};
    \barpecetaro{\lx+\ly}{0};

    \rbarpecetaro{\llx-\lllm}{\lllm};
    \barpecetaro{\llx+\lx-\lllm-\llm}{\lllm-\llm};
    \rbarpecetaro{\llx-\lllm+\lx+\ly}{\lllm};
    
    \draw[peco,opacity=0] (0,0) to (\llx,0) to (\llx-\llm,\llm) to (\llm,\llm)to cycle;
    \draw[peco,opacity=0,shift={(\llx-\lllm,\lllm)}] (0,0) to (\llx,0) to (\llx-\llm,-\llm) to (\llm,-\llm)to cycle;
    \draw[peco,opacity=0,shift={(-\llx+\lllm,\lllm)}] (0,0) to (\llx,0) to (\llx-\llm,-\llm) to (\llm,-\llm)to cycle;

    \draw[peco,opacity=0] (0,0) to (\lx,0) to (\lx-\lm,\lm) to (\lm,\lm)to cycle;
    \draw[peco,opacity=0,shift={(\ly+\lx,0)}] (0,0) to (\lx,0) to (\lx-\lm,\lm) to (\lm,\lm)to cycle;
    \draw[peco,opacity=0,shift={(\lx-\llm,\llm)}] (0,0) to (\lx,0) to (\lx-\lm,-\lm) to (\lm,-\lm)to cycle;

    \begin{scope}[shift={(\llx-\lllm,\lllm)},opacity=0]
      \draw[peco] (0,0) to (\lx,0) to (\lx-\lm,-\lm) to (\lm,-\lm)to cycle;
      \draw[peco,shift={(\ly+\lx,0)}] (0,0) to (\lx,0) to (\lx-\lm,-\lm) to (\lm,-\lm)to cycle;
      \draw[peco,shift={(\lx-\llm,-\llm)}] (0,0) to (\lx,0) to (\lx-\lm,\lm) to (\lm,\lm)to cycle;
    \end{scope}

    \begin{scope}[shift={(-\llx+\lllm,\lllm)},opacity=0]
      \draw[peco] (0,0) to (\lx,0) to (\lx-\lm,-\lm) to (\lm,-\lm)to cycle;
      \draw[peco,shift={(\ly+\lx,0)}] (0,0) to (\lx,0) to (\lx-\lm,-\lm) to (\lm,-\lm)to cycle;
      \draw[peco,shift={(\lx-\llm,-\llm)}] (0,0) to (\lx,0) to (\lx-\lm,\lm) to (\lm,\lm)to cycle;
    \end{scope}

    \draw[peco, opacity=.1, shift={(-\llx+\lllm,\lllm)}] (0,0) to (\lllx,0) to (\lllx-\lllm,-\lllm) to (\lllm,-\lllm) to cycle;


    \draw[blue]  (\llx/2,\llm*1.2) to  (\llx/2+\lx+\ly/2+\llm*0.1,\llm*1.2) to  node[black,right]{$\lambda_1$}  (\llx/2+\lx+\ly/2+\llm*0.1,-\llm*0.2) to   (\llx/2,-\llm*0.2);

    \node[] at   (\llx/2+\x/2,\llm-\lm*0.5)  {\dots};

    \draw[blue,thin]  (\llx/2,\llm*1.2-\lm*0.2) to  (\llx/2+\x+\y/2+\lm*0.5,\llm*1.2-\lm*0.2) to  node[black,right]{$\lambda_2$}  (\llx/2+\x+\y/2+\lm*0.5,\llm-\lm*1.2) to   (\llx/2,\llm-\lm*1.2);

    \draw[verda,ultra thick] (\llx/2-1,-3) to[in=-90]  node[below right,black]{$\tau$} (\llx/2,0);
    \draw[verda, ultra thick] (\llx/2,0) to (\llx/2,\lllm);

    \draw[verda,ultra thick] (\llx/2+1,\lllm+3) to[out=-120,in=90]  node[above left,black]{$\tau'$} (\llx/2,\lllm);

    \draw[fill] (\llx/2,1.95) node[left] {$\tau(1)=\tau'(1)=e$} circle [radius=.15];



  \end{tikzpicture}

  \caption{There is an infinite sequence of curves
    $\lambda_n\subset\mathbb{R}^2\setminus B$ whose Euclidean length
    tends to $0$ as $n$ tends to infinity and who ``witnesses'' in
    $\mathbb{R}^2\setminus B$ that the two halves of the centerline
    meet at $e$, cf.\ Lemma~\ref{lem-em-intersection}. \label{fig-M}}

\end{figure}

\begin{proof}
  Let $e\in B$ be an eventually middle point. There is a series of
  nested trapezoids corresponding to the choice sequence of $e$.  Let
  $B^0_n$ be the first trapezoid of this sequence. Let $x$ be such that
  $(-\ell,x)$ and $(\ell,x$) are on the vertical centerline of
  $B^0_n$. Then the edges of $B^0_n$ at its longer base are lightlike
  related to points $(-S/2,x)$ and $(S/2,x)$, cf.\ Figure~\ref{fig-EM}.

  First by induction, we show that: \emph{Each point $p$ of the lower
  bases of the trapezoids from $B^0_n$ are reachable from below from
  point $(-\ell,x)$ by a timelike curve intersecting $B$ at most this
  point $p$; and analogously, the upper bases of these trapezoids are
  reachable from above from $(\ell,x)$ by a timelike curve avoiding $B$
  in this sense.} 

  By symmetry, it is enough to show the first part of this
  statement. The base case, when $p$ is on the bottom base of $B^0_n$ is
  easy because we have that the line segment from $(-\ell,x)$ to $p$ is
  timelike since point $(-S/2,x)$ and the endpoints of the bottom base
  of $B^0_n$ are lightlike separated, and $\ell>S/2$. Since there is no
  point of $B$ below this bottom base, we can easily reach $p$ by an
  appropriate timelike curve from $(-\ell,x)$.

  To see the induction step, it is enough to observe that the bottom
  base of each trapezoid is either part of the bottom base of the
  trapezoid of the previous iteration step, and then we have already
  reached it with an appropriate timelike curve; or it is reachable from
  a point of that base which is not in $B$ by a timelike curve avoiding
  $B$; and hence, by continuing the curve given by the induction
  hypothesis with this one, we get the timelike curve we need. The above
  property used in the induction step is easy to confirm by the
  self-similarity of the construction, see Figure~\ref{fig-EM}.

  Consider now the trapezoid after which we always choose $\mathcal{M}$
  to reach $e$. By Lemma~\ref{lem-CL}, inside this trapezoid,
  the vertical centerline intersects $B$ only in the point $e$, see Figure~\ref{fig-M}.

  Now take a timelike curve from $(-\ell,x)$ to the bottom point of this
  centerline and continue it with the bottom part until $e$, with an
  appropriate parametrization this gives curve $\tau$, and a
  completely analogous way we can find a $\tau'$ connecting $(\ell,x)$
  and $e$.

  By self-similarity, to see the existence of curves $\lambda_n$, it is
  enough to see the existence of $r_1$, $t_1$ and $\lambda_1$, which is
  easy because there is plenty of space outside $B$ but inside the
  first middle trapezoid to go around the next middle trapezoid
  connecting some point $\tau(r_1)$ and $\tau'(t_1)$. Now, we can
  recursively define $\lambda_n$ by scaling down $\lambda_{n-1}$ by
  scaling factor $1/\lambda$. That the parameter points $r_n$ and $t_n$
  of $\tau$ and $\tau'$ tend to $0$, as well as, that the Euclidean
  length of $\lambda_n$ tends to $0$ when $n$ tends to infinity follows
  from the fact that, in each step, we scaled down by factor
  $1/\lambda<1/2$.
\end{proof}

\section{All extensions of the punctured time-rolled Minkowski spacetime have closed timelike curves}
\label{sec-E}

The idea of the proof is that any proper extension of $\mathbf{M^-}$ has to fill in a point of $B$ since time-rolled Minkowski spacetime is geodesically complete. Once the extension fills in a point, it fills in nearby points, of $B$, too (see the next lemma). To any point of $B$, arbitrarily close there are eventually middle points, and those are the only missing points of some CTCs in $\mathbf{M^-}$. Thus, the extension will have at least these CTCs.

We begin by proving a general lemma about extensions. We say that an extension fills in a limit for a curve living in the smaller spacetime if the curve converges to a point in the extension (but it may not converge to any point in the smaller spacetime). 
The next lemma says, intuitively, that if an extension fills in a limit, it also fills in nearby limits witnessed by short curves in a coordinate system, and two such new limits coincide if a system of short coordinate curves witness their coincidence.

Some notation: For a chart $\psi$ of $\mathbf{S}$ and broken curve $\delta:[0,1)\to S$, we denote the Euclidean coordinate-length of $\psi(\delta)$ by $|\delta|_{\psi}$. We say that $\delta$ convereges to $q$ in $S$ if there is a broken curve $\delta':[0,1]\to S$ such that $\delta'(1)=q$ and $\delta'(x)=\delta(x)$ for all $0\le x<1$. We say that $\delta$ can be continued if there is $\delta':[0,y)\to S$ with $y>1$ and $\delta'(x)=\delta(x)$ for all $0\le x<1$. 
  By $\delta\subseteq X$ we mean that the range of $\delta$ is a subset of $X$.

  The next lemma is interesting when $p\notin N$. Lemma~\ref{ext-lem}(ii) below is somewhat similar to Proposition 5.1.\ of \cite{S}.

  \begin{lemma} \label{ext-lem}
    Let $m\ge 2$, let $\mathbf{S}=(S,g)$ be an $m$-dimensonal Lorentzian manifold, let $O$ be an open set of $\mathbf{S}$, and let $\psi:N\to\mathbb{R}^m$ be a chart of $\mathbf{S}$ such that the components $g_{ij}$ and $\partial_kg_{ij}$ are bounded in the range of $\psi$. Then for all $p\in O$ there is $\varepsilon\in\mathbb{R}$ such that (i) and (ii) below hold for all broken curves $\delta, \delta':[0,1)\to S$ starting at $p$ (i.e., $\delta(0)=\delta'(0)=p)$. Let $\delta^-$ denote the curve $\delta$ without its starting point, i.e., $\delta^-=\delta\setminus\{ p\}$ and similarly for other broken curves starting at $p$. 
      \begin{description}
      \item[(i)] If $\delta^-\subseteq N$ and $|\delta^-|_{\psi}<\varepsilon$ then $\delta\subset O$ and $\delta$ can be continued in $O$.
      \item[(ii)] If $\delta,\delta'$ are as in (i) and $\psi(\delta^-), \psi(\delta'^-)$ converge to the same point in $\mathbb{R}^m$ such that this is ``witnessed by a vanishing $\psi$-ladder", then they converge to the same point in $\mathbf{S}$, too. In more detail: Let $r_n, t_n\in (0,1)$ be such that they tend to $1$ when $n$ tends to infinity. Let the curves $\lambda_n\subset N$ connect $\delta(r_n)$  with $\delta'(t_n)$ such that $|\lambda_n|_{\psi}$ converges to $0$ when $n$ tends to infinity. Then $\delta$ and $\delta'$ converge to the same point in $O$.
      \end{description}
  \end{lemma}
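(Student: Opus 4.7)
The plan is to work in a chart $\varphi: U \to B_2^m(0) \subset \mathbb{R}^m$ centered at $p$ (so $\varphi(p) = 0$) with $\overline{U}$ compactly contained in $O$ (possible because $O$ is open), and to set $V = \varphi^{-1}(B_1^m(0))$ so that $\overline{V} \subset U \subset O$. On the compact set $\overline{V}$, smoothness and non-degeneracy of $g$ yield uniform bounds on $g_{ij}^{\varphi}$ and on $|\det g^{\varphi}|^{-1}$. The crux is a chart-comparison estimate: there is a constant $C > 0$ such that for every curve $\gamma \subset V \cap N$,
\[
C^{-1}\,|\gamma|_{\varphi} \;\le\; |\gamma|_{\psi} \;\le\; C\,|\gamma|_{\varphi}.
\]
This should follow from the transition relation $g^{\varphi} = (\partial T)^{\top}\, g^{\psi}\, (\partial T)$ where $T = \psi \circ \varphi^{-1}$, together with the boundedness of $g^{\psi}$ and $\partial_k g^{\psi}$ in $\psi(N)$ (by hypothesis) and of $g^{\varphi}$ on $\overline{V}$ (by compactness); this is essentially Sbierski's Proposition~5.1 adapted to our setting.

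For part (i), I would set $\varepsilon = 1/(2C)$. If a curve $\delta$ satisfying the hypotheses exited $V$ at some first time $t^{*} \in (0, 1)$, then $\varphi \circ \delta$ would have to traverse Euclidean distance at least $1$ on $[0, t^{*}]$, while by the comparison its length is at most $C\,|\delta^{-}|_{\psi} \le C\,\varepsilon = 1/2$, a contradiction. Hence $\delta \subset V \subset O$. Since $\psi \circ \delta^{-}$ is a rectifiable curve of finite Euclidean length in $\mathbb{R}^m$, it is Cauchy and converges to some $q^{*} \in \mathbb{R}^m$ as $t \to 1^-$; the comparison then forces $\varphi \circ \delta^{-}$ to be Cauchy as well, converging to some $x^{*} \in \overline{B_1^m(0)}$, so $\delta(t) \to \varphi^{-1}(x^{*}) \in \overline{V} \subset O$, allowing continuation in $O$.

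For part (ii), applying (i) supplies limits $q, q' \in \overline{V}$ for $\delta$ and $\delta'$, with $\delta(r_n) \to q$ and $\delta'(t_n) \to q'$. An exit-time argument centered at $\delta(r_n)$ (which lies in $V$ and is close to $q$ for large $n$; enlarging $V$ slightly handles the case $\varphi(q) \in \partial B_1^m(0)$) combined with $|\lambda_n|_{\psi} \to 0$ confines $\lambda_n$ to $V$ eventually; the comparison then yields $|\lambda_n|_{\varphi} \le C\,|\lambda_n|_{\psi} \to 0$. This forces $|\varphi(\delta(r_n)) - \varphi(\delta'(t_n))|_{\mathrm{Eucl}} \to 0$; passing to the limit and using injectivity of $\varphi$ on $\overline{V}$ gives $q = q'$. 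The main obstacle is the chart-comparison step. When $p \notin N$, the transition $T$ need not extend continuously to $\varphi(p)$ and $\psi(V \cap N)$ may have intricate boundary in $\mathbb{R}^m$; extracting a uniform constant $C$ from the hypothesis requires careful analysis, and this is where the assumption on $\partial_k g_{ij}^{\psi}$ does real work, preventing pathological behavior of the transition map near the boundary of $\psi(N)$.
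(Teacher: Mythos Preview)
Your overall architecture matches the paper's: fix a chart $\xi$ (your $\varphi$) around $p$ whose range is a ball, and run an exit-time argument to show that a curve short in $|\cdot|_\psi$ cannot escape that ball. The difference, and the genuine gap, is exactly where you flag it: the chart-comparison inequality $C^{-1}|\gamma|_\varphi \le |\gamma|_\psi \le C|\gamma|_\varphi$ does not follow from the transition relation $g^\varphi = (\partial T)^\top g^\psi (\partial T)$ together with bounds on the metric components. In Riemannian signature such a relation would indeed pin down $\|\partial T\|$, but in Lorentzian signature it does not: a Lorentz boost $J$ satisfies $J^\top \eta J = \eta$ with $\|J\|$ arbitrarily large, so bounded $g^\varphi$, $g^\psi$ place no constraint on the Euclidean operator norm of $\partial T$. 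Your remark that the $\partial_k g_{ij}^\psi$ bound ``does real work'' is correct in spirit, but you have not indicated any mechanism by which it controls $\partial T$.

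The paper supplies that mechanism by interposing an intrinsic (chart-free) length: Sbierski's \emph{generalized affine parameter} length $L_{gap,e}$ with respect to an orthonormal frame $e$. The argument runs as follows. Pick an orthonormal frame $e$ at the exit point $q$; Lemma~4.2 of \cite{S} (using the $\psi$-bounds on $g_{ij}$ and $\partial_k g_{ij}$) gives $gap(\rho) \le b\,|\rho|_\psi$ for the curve $\rho$ run from $q$ back toward $p$. Parallel-transporting $e$ along the curve to a frame $f$ at $p$, one has $L_{gap,f}(\gamma) = L_{gap,e}(\rho)$ by definition; then Corollary~4.12 of \cite{S} (using the $\xi$-bounds, available by compactness) gives $|\gamma|_\xi \le d\cdot gap(\gamma)$. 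Chaining these yields $|\gamma|_\xi \le db\,|\gamma^-|_\psi$, which is the one-sided comparison you need for the exit-time contradiction. The derivative bound on $g^\psi$ enters through the parallel-transport ODE that defines the gap-length, which is why it is indispensable. For part (ii) the paper does not stay centered at $p$; it re-applies the bound from (i) centered at the limit point $q$ of $\delta$, building curves $\rho_n$ that go from $q$ back along $\delta$, across $\lambda_n$, and forward along $\delta'$, and concludes $|\rho_n|_\xi \to 0$ directly. Your version of (ii) would work if the two-sided comparison held, but since only the one-sided bound is available, the recentering at $q$ is what makes the argument go through.
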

  \begin{proof}
    Let $\mathbf{S}$, $\psi$, $O$, $p$ and $\delta$ be as in the lemma such that $\delta\setminus\{ p\}\subset N$. Let $\xi:D\to\mathbb{R}^m$ be a chart in $\mathbf{S}$ such that $p\in D$. Since the range of $\xi$ is an open set in $\mathbb{R}^m$, we may assume that the range of $\xi$ is an open ball $G$ of radius $r$ around $\xi(p)$. By taking $r$ to be small enough, we may assume that, in $\xi$, the components $g_{ij}$ of $g$ as well as the components $\partial_kg_{ij}$ of the derivatives of $g$ are bounded by a number in $G$. Let $C_g$ be a common bound for the components of $g$ and its derivatives in the coordinate systems $\xi$ and $\psi$.

    First we prove (i). 
    Assume that $\delta$ is not a subset of $D$. There is $0<a<1$ such that $\delta(a)\notin D$ but $\delta(x)\in D$ for all $0\le x<a$. Let $\gamma$ be $\delta$ till this point, i.e., $\gamma:[0,a)\to S$ such that $\gamma(x)=\delta(x)$ for $0\le x<a$. We are going to show that $\gamma$ cannot be too short, i.e., there is $\varepsilon\in\mathbb{R}$ such that $|\gamma^-|_\psi > \varepsilon$. Therefore, $\delta\subset D\subseteq O$ if $|\delta^-|_\psi< \varepsilon$  and also $\delta$ can be continued in $D$ in this case.

      Let $k\in\mathbb{R}$ be arbitrary and assume that 
      \[ |\gamma^-|_{\psi}< k. \tag{1}\] 
      Let $q=\delta(a)$ and let $\rho$ denote the curve $\gamma$ taken from $q$ till $p$ but such that $p\notin\rho$, i.e., $\rho:[0,a)\to S$ is defined by $\rho(x)=\gamma(a-x)$ for $x\in [0,a)$. Then $|\rho|_\psi=|\gamma^-|_\psi$ and $\rho\subset N$, $\gamma\subset O$.
	  
	  From here on in the proof of (i), we will extensively rely on Section 4 of \cite{S}.
	  Take an orthonormal basis $e=(e_i : i<n)$ in $T_q\mathbf{S}$. Let $gap(\rho)$ denote the general-affine-paramater length of $\rho$ with respect to $e$.  This is denoted by $L_{gap,e}(\rho)$ in \cite{S}, its definition is recalled at the beginning of Section 4 of \cite{S}. By Lemma 4.2 of \cite{S}, there is $0<b<\omega$, depending only on $k, e$ and $C_g$, such that 
	  \[ gap(\rho)< b\cdot|\rho|_\psi \qquad\text{if}\quad |\rho|_\psi< k .\]
	  We have $|\rho|_\psi < k$ by $|\gamma^-|_{\psi}< k$.
	  
	  Let now $f=(f_i : i<n)$ in $T_p\mathbf{S}$ be the orthonormal basis which we  get if we parallel transport $e$ along $\delta$ from $q$ to $p$.  By the definition of gap-length, then $gap(\gamma)=L_{gap,f}(\gamma)$ taken with this basis $f$ is the same as $gap(\rho)=L_{gap,e}(\rho)$. Now we can apply Corollary 4.12 of \cite{S} to the chart $\xi$, since $\gamma\subset D$. It says that there are $c,d\in\mathbb{R}$ depending only on $C_g$ and $f$ such that if $gap(\gamma)< c$ then $|\gamma|_\xi < d\cdot gap(\gamma)$.  We got
	  \[|\gamma|_\xi< d\cdot gap(\gamma) = d\cdot gap(\rho)< d\cdot b\cdot |\rho|_\psi = d\cdot b\cdot |\gamma^-|_\psi, \]
	  i.e., 
	  \[ |\gamma|_\xi < d\cdot b\cdot |\gamma^-|_\psi \tag{2}\] whenever $gap(\gamma)<c$. The latter holds if
	  \[ |\gamma^-|_\psi < c\cdot b^{-1} .\tag{3}\]
	  
	  Now we use that the range of $\xi$ is a ball with radius $r$. This implies that $|\gamma|_\xi$ cannot be shorter than $r$ because it is a curve starting at the center of the ball  and leaving the ball. Hence $|\gamma|_\xi\ge r $
	  and so \[ |\gamma^-|_\psi\ge r\cdot (d\cdot b)^{-1}\tag{4}.\] Let $\varepsilon = min\{k, c\cdot b^{-1}, r\cdot (d\cdot b)^{-1} \}$. Taking this $\varepsilon$ makes (i) true. We also got in the proof that there is a bound $K\in\mathbb{R}$ such that 
	  \[ |\delta|_\xi < K\cdot |\delta^-|_\psi \tag{5}\] whenever $|\delta^-|_\psi<\varepsilon$ (namely, by (2), we can take $K=b\cdot d$).
	  
	  Proof of (ii): Let $\delta, \delta'$ and $\lambda_n$ be as in the statement of (ii). By  (i), both $\delta$ and $\delta'$ converge to points of $O$, say to $q$ and $q'$. These $q$ and $q'$ may not belong to $N$.  We want to show $q=q'$. Let the broken curves $\gamma_n$ be defined as $\delta$ from $p$ till $\delta(r_n)$ and then continued with $\lambda_n$ till $\delta'(t_n)$.  Then by our conditions, for large enough $n$, the broken curve $\gamma_n$ satisfies the conditions for (i), i.e., it starts at $p$, $\gamma_n^-\subset N$, and $|\gamma_n^-|_\psi < \varepsilon$. Thus there is $n_0$ such that $\gamma_n\subset O$, in particular $\lambda_n\subset O$ for all $n\ge n_0$. 
	  
	  Let us consider now curves starting at $q$. We have $q\in O$, by (i). Let $\varepsilon_0\in\mathbb{R}$ be the bound that exists for $q$ according to (i). Let the broken curves $\rho_n$ be defined as starting from $q$ then going in reverse direction along $\delta$ till $\delta(r_n)$, continuing along $\lambda_n$ till $\delta'(t_n)$ and then continuing along $\delta'$ till its end (so that $q'\notin\rho_n$). Then $\rho_n^-\subset N$ and $|\rho_n^-|_\psi$ tends to $0$ as $n$ tends to infinity. Let $n_1\ge n_0$ be such that  $|\rho_n^-|_\psi <\varepsilon_0$ for all $n\ge n_1$. 
	  
	  Let $K_0$ be the bound that exists for $q$ by the proof of (i), i.e., we have $|\rho_n|_\xi < K_0\cdot|\rho_n^-|_\psi$ for all $n\ge n_1$. Thus $|\rho_n|_\xi$ tends to $0$ as $n$ tends to infinity. Since $\xi(\rho_n)$ starts at $\xi(q)$ and converges to $\xi(q')$ for all $n\ge n_1$, this means  that $\xi(q)=\xi(q')$.  Hence $q=q'$ since $\xi$ is a bijection, and we are done.	
  \end{proof}

  We are ready for proving the main property of $\mathbf{M^-}$, namely,
  that it is maximal among the spacetimes that do not contain CTCs.

  \begin{prop}\label{prop-max}
    Each proper extension of $\mathbf{M^-}$ contains closed timelike curves.
  \end{prop}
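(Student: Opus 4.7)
The plan is to show any proper extension $\mathbf{M}'=(M',g')$ of $\mathbf{M^-}$ must fill in an eventually middle point $e\in B$, after which the timelike curves $\tau,\tau'$ supplied by Lemma~\ref{lem-em-intersection} at $e$ descend to a closed timelike curve in $\mathbf{M}'$. First I would show that at least one point of $B$ is filled in. Because $M^-$ is open in $M'$ and $M'$ is connected, pick $p\in(M'\setminus M^-)\cap\overline{M^-}$ and take a smooth geodesic through $p$ that enters $M^-$; its $M^-$-portion is a geodesic converging to $p$ in $\mathbf{M}'$. Since time-rolled Minkowski spacetime (without removing $B$) is geodesically complete, this same geodesic, viewed in the unpunctured cylinder, has a well-defined Minkowski endpoint at the opposite parameter value; that endpoint cannot lie in $M^-$ (else $p\in M^-$), so it lies in $B$. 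Call it $b$.

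Next I would show the extension also fills in some eventually middle point close to $b$. Fix a chart $\psi:N\to\mathbb{R}^2$ of $\mathbf{M}'$ around $p$ satisfying the hypotheses of Lemma~\ref{ext-lem}. By Lemma~\ref{lem-em-dense}, pick an eventually middle point $e\in B$ arbitrarily close in Minkowski to $b$, and consider the curves $\tau,\tau'$ and vanishing ladder $\lambda_n$ from Lemma~\ref{lem-em-intersection}. With $e$ chosen close enough to $b$, one arranges that the tails of $\tau,\tau'$ near $e$ and all $\lambda_n$ lie in $N$; boundedness of the chart transition between Minkowski and $\psi$-coordinates then turns the vanishing of the Minkowski lengths of $\lambda_n$ into vanishing of $|\lambda_n|_\psi$. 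Lemma~\ref{ext-lem}(i) is then used to conclude that $\tau$ and $\tau'$ converge in $\mathbf{M}'$ (to points outside $M^-$, since in Minkowski they end at $e\in B$), and Lemma~\ref{ext-lem}(ii), applied with the vanishing $\psi$-ladder $\lambda_n$, forces these two limits to coincide at a single point $\tilde e\in M'\setminus M^-$.

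Finally, to produce the CTC: in the rolled-up spacetime $(-\ell,x)$ and $(\ell,x)$ are identified, giving a single point $p_0\in M^-$. The curve $\tau$ is future-directed timelike from $p_0$ to $\tilde e$ in $\mathbf{M}'$, and the time-reverse of $\tau'$ is future-directed timelike from $\tilde e$ back to $p_0$; their concatenation is a future-directed closed timelike curve through $p_0$ and $\tilde e$, as required.

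The main obstacle is the second step. In particular, showing that $\tau$ and $\tau'$ converge in $\mathbf{M}'$ requires carefully relating the chart $\psi$ around the known filled-in point $p$ (whose link to the Minkowski structure is controlled only indirectly, via the convergence of the single geodesic from the first step) to the Minkowski neighborhoods of both $b$ and $e$. The density of eventually middle points (Lemma~\ref{lem-em-dense}) together with the vanishing-ladder structure (Lemma~\ref{lem-em-intersection}) are used to keep all relevant $\psi$-lengths small, so that parts (i) and (ii) of Lemma~\ref{ext-lem} together deliver both the convergence of $\tau,\tau'$ in $\mathbf{M}'$ and the coincidence of their limits.
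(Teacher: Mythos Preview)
Your overall strategy matches the paper's: a geodesic in $M^-$ must limit to a point $p'\in B$, density of eventually middle points gives a nearby $e$, and then Lemma~\ref{ext-lem} together with Lemma~\ref{lem-em-intersection} should force $\tau$ and $\tau'$ to share a limit in the extension, closing up into a CTC. However, your implementation of the second step contains a real gap, centered on two related points.

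First, your choice of chart is wrong. You propose to ``fix a chart $\psi:N\to\mathbb{R}^2$ of $\mathbf{M}'$ around $p$'' and then speak of ``boundedness of the chart transition between Minkowski and $\psi$-coordinates.'' But near $p$ you have no control whatsoever over the metric of the extension, so there is no reason the components $g_{ij},\partial_k g_{ij}$ would be bounded in such a chart, nor that any transition map would be bounded. The whole point of Lemma~\ref{ext-lem} is that $p$ need \emph{not} lie in $N$: only $\delta^-$ must. The paper takes $\psi$ to be the \emph{identity} on $N=((-1,2)\times\mathbb{R})\setminus B\subset M^-$, where the metric is literally Minkowski and the boundedness hypothesis is trivial; $p\notin N$, and Lemma~\ref{ext-lem} is precisely the tool that bridges the known chart on $M^-$ to convergence near the unknown point $p$.

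Second, and consequently, you cannot apply Lemma~\ref{ext-lem}(i) directly to $\tau,\tau'$: that lemma requires curves \emph{starting at} $p$, whereas $\tau,\tau'$ start at $(\pm\ell,x)$. The paper's missing ingredient in your sketch is the explicit construction of auxiliary broken curves $\delta,\delta'$ that start at $p$, run backward along $\gamma$ a short way into $N$, then travel around a small rectangle $R$ (disjoint from $B$) surrounding a sufficiently deep trapezoid containing both $p'$ and $e$, and finally run along the vertical centerline toward $e$ (using Lemma~\ref{lem-CL} to stay in $N$). Choosing the trapezoid small enough makes $|\delta^-|_\psi,|\delta'^-|_\psi<\varepsilon$, so Lemma~\ref{ext-lem}(i) gives convergence; since the tails of $\delta,\delta'$ coincide with segments of $\tau,\tau'$, the ladder from Lemma~\ref{lem-em-intersection} then feeds into Lemma~\ref{ext-lem}(ii) exactly as you intend. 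Without this bridge construction your plan does not go through.
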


  \begin{proof}  
   Assume that
    $\mathbf{S}=(S,g)$ is a proper extension of
    $\mathbf{M^-}=(M^-,\eta)$.  We may assume that $\mathbf{M^-}$ is the
    restriction of $\mathbf{S}$ to $M^-\subset S$ and $M^-$ is an open
    set in $\mathbf{S}$. 

    From now on, we will work in $\mathbf{S}$. 
    \begin{description}
    \item[Step 1] We show%
    \footnote{We give a reason, but this is well-known, see, e.g., Lemma A.6 in \cite{S}.}
     that there is geodesic $\gamma \subset M^-$
      converging to some $p\in S\setminus M^-$.
    \end{description}

    Let $q\in M^-$ and $p\in S\setminus M^-$ be arbitrary, there are
    such points. There is a broken geodesic $\gamma$ that connects
    them in $\mathbf{S}$ because $\mathbf{S}$ is connected. By the
    properties of the real numbers, and because $p\in M^-$ and
    $q\notin M^-$, there is a first point in $\gamma$ that is not in
    $M^-$. Thus, we may assume that $\gamma\subset M^-$ is a geodesic
    that converges to $p\notin M^-$ (in $\mathbf{S}$ of course), by
    letting $\gamma$ be the last portion of the broken geodesic that
    lies in $M^-$ and taking $p$ to be the first point that is not in
    $M^-$. We may assume that $\gamma\subset\mathbb{R}^2$.

    \begin{description}
    \item[Step 2] We show that $\gamma$ converges, in $\mathbb{R}^2$,
      to a point $p'\in B$.
    \end{description}

    This is so because $\gamma$ has to converge to some point we left
    out from rolled-up Minkowski spacetime as the latter is
    geodesically complete. 

    \smallskip
    We are ready to apply Lemma~\ref{ext-lem}. Let
    $N=((-1,2)\times\mathbb{R})\setminus B$. Then $N$ is an open
    subset of $M^-$ as well as it is a subset of $\mathbb{R}^2$.  Let
    $\psi:N\to\mathbb{R}^2$ be defined to be the identity, i.e.,
    $\psi(x,y)=(x,y)$ for all $(x,y)\in N$. Then $\psi$ is a chart of
    $\mathbf{M^-}$, so it is a chart of $\mathbf{S}$, too, because
    $\mathbf{S}$ is an extension of $\mathbf{M^-}$.

    We are going to apply Lemma~\ref{ext-lem} to $\mathbf{S}$ and
    $\psi$ by taking $O$ to be any open neighborhood of $p$ (in
    $\mathbf{S}$, of course).  By using this scenario and the
    properties of our concrete example $\mathbf{M^-}$, we are going to
    show that an eventually middle point in $B$ is ``filled in'' in $O$, and this
    will bring in the CTC whose only missing point in $\mathbf{M^-}$
    was this middle point.

    The conditions of Lemma~\ref{ext-lem} hold because on chart $\psi$
    the components $g_{ij}$ are $0$ or $1$ and all of $\partial
    g_{ij}=0$ since here the metric is the standard Minkowski
    metric. Let $\varepsilon\in \mathbb{R}$ be as given by
    Lemma~\ref{ext-lem} for $p$.

    \begin{description}
    \item [Step 3] There are broken curves $\delta,\delta':[0,1) \to
      S$ starting at $p$ and short enough according to chart $\psi$
      such that apart from their starting points they stay within
      $N\cap O$, and on chart $\psi$, they both converge to some
      eventually middle point $e\in B$ from opposite vertical
      directions.
    \end{description}

    \begin{figure}[h!tb]
      \begin{tikzpicture}[scale=0.45]
        \tikzstyle{lumo}=[red,thick]
        \tikzstyle{peco}=[black!50,fill=black!50,opacity=0.35]
	\tikzstyle{verda}=[green!84!black] 

        \tikzmath{\x=4.0; \l =2.4; \lx=\l*\x; \m=(3-\l)*\x/2/\l; \lm=\l*\m; \y=\x-2*\lm; \ly=\l*\y;  \llm=\l*\lm;\llx=\l*\lx;\lly=\l*\ly;  \lllm=\l*\llm; \lllx=\l*\llx; \llly=\l*\lly;} 

        \newcommand{\barpecetaro}[2]{
          \begin{scope}[shift={(#1,#2)}]
            \draw[peco] (0,0) to (\x,0) to (\x-\m,\m) to (\m,\m)to cycle;
            \draw[peco,shift={(\x+\y,0)}] (0,0) to (\x,0) to (\x-\m,\m) to (\m,\m)to cycle;
            \draw[peco] (2*\x-\lm,\lm) to (\x-\lm,\lm) to (\x-\lm+\m,\lm-\m) to (2*\x-\lm-\m,\lm-\m) to cycle;
          \end{scope}
        }

        \newcommand{\rbarpecetaro}[2]{
          \begin{scope}[shift={(#1,#2)}]
            \draw[peco] (0,0) to (\x,0) to (\x-\m,-\m) to (\m,-\m)to cycle;
            \draw[peco,shift={(\x+\y,0)}] (0,0) to (\x,0) to (\x-\m,-\m) to (\m,-\m)to cycle;
            \draw[peco] (2*\x-\lm,-\lm) to (\x-\lm,-\lm) to (\x-\lm+\m,-\lm+\m) to (2*\x-\lm-\m,-\lm+\m) to cycle;
          \end{scope}
        }

        \barpecetaro{0}{0};
        \rbarpecetaro{\lx-\llm}{\llm};
        \barpecetaro{\lx+\ly}{0};

        \draw[peco,opacity=0.25] (0,0) to (\llx,0) to (\llx-\llm,\llm) to (\llm,\llm)to cycle;

        \coordinate (e) at (\lx/2,.95);
        \coordinate (p) at (\llx-\x+\lm,\lm);
        \coordinate (C) at (\llx/2+\lx+\ly/2+1,\llm +1);
        \coordinate (B) at (\llx/2+\lx+\ly/2+1,-1) ;
        \coordinate (A) at (\lx/2,-1);

        \draw[blue]  (-1,\llm +1) node [black,above left] {$R$} to  (C) to  (B) to   (-1,-1) to cycle;
        \draw[verda,] (\lx/2,-2.5) to (\lx/2,\llm+2.5);

        \draw[black] (p) to node[above right,black, pos=4/5] {$\gamma$}  ([shift={(-3,1.5*(\llm-\lm+1))}]p)  ;

        \draw[magenta,ultra thick]  (p)  to ([shift={(-2,\llm-\lm+1)}]p)  to (C) to node[right,black]{$\delta$} (B) to (A) to  (e);

        \draw[fill] (e) node[below left] {$e$} circle [radius=.15];
        \draw[fill] (p) node[right] {$p'$} circle [radius=.15];

        \draw [decoration={brace,amplitude=10,raise=2},decorate] (B) to node[black,below right=3, yshift=-5] {$<\varepsilon/6$} (-1,-1);

      \end{tikzpicture}
      \caption{This figure illustrates the construction of the curves
        used in Step 3. \label{fig-delta}}
    \end{figure}

    Except from the initial ones, around every trapezoid used in the
    construction of $B$, there is a surrounding rectangle not
    intersecting $B$.  By Lemma~\ref{lem-em-dense}, there is an
    eventually middle point $e\in B$ arbitrarily close to $p'$. Let
    $e$ be so close to $p'$ that they are in a non-initial trapezoid
    whose surrounding rectangle $R$ has sides less than
    $\varepsilon/6$, see Figure~\ref{fig-delta}. Since $\gamma$ tends
    to $p'$ here, we can assume without loss of generality that
    $\gamma$ intersects $R$. Let $\delta:[0,1)\to S$ be the broken
      curve defined as follows and as illustrated by
      Figure~\ref{fig-delta}: We go from $p$ along $\gamma$ backwards
      until the intersection point of $\gamma$ and $R$. From this
      point, we go around along $R$ until the intersection point of
      the vertical line through $e$ and $R$. Finally, we go along the
      vertical line until $e$. Analogously, we get $\delta'$ going to
      $e$ in the opposite direction after the first breaking point. The
      coordinate lengths $|\delta|_\psi$ and $|\delta'|_\psi$ are both
      less than $\varepsilon$ because the coordinate length of the
      part along $\gamma$ is shorter than the diagonal of $R$, which
      is less than $2\cdot \varepsilon/6$, and the rest is less than
      $4\cdot \varepsilon/6$ because it contains at most 4 segments
      each of which is shorter than the longest sides of $R$. By their
      construction and Lemma~\ref{lem-CL}, both $\delta$ and $\delta'$
      stay within $N$ apart from point
      $p=\delta(0)=\delta'(0)$. Hence, by (i) of Lemma~\ref{ext-lem},
      we have that, apart from $p$, they are also in $O$.

      \begin{description}
      \item [Step 4] In $\mathbf{S}$, $\delta,\delta':[0,1) \to S$
        converge and have the same limit point.
      \end{description}

      By (i) of Lemma~\ref{ext-lem}, $\delta$ and $\delta'$ can even
      be continued in $O$, and hence they have limit points in
      $\mathbf{S}$.  After a while, the ranges of curves $\delta$ and
      $\delta'$ overlap with those of timelike curves $\tau$ and
      $\tau'$ given by Lemma~\ref{lem-em-intersection}. Hence, by
      Lemma~\ref{lem-em-intersection}, we have the ``vanishing
      $\psi$-ladder'' witnessing that $\delta$ and $\delta'$ converge
      to the same point in chart $\psi$ required to apply (ii) of
      Lemma~\ref{ext-lem} and to conclude that they converge to the
      same point in $O$, and hence in $\mathbf{S}$.

      \begin{description}
      \item [Step 5] Through this common limit point of $\delta$ and
        $\delta'$, there is a CTC in $\mathbf{S}$.
      \end{description}

      Since after a while the ranges of curves $\delta$ and $\delta'$
      overlap with those of $\tau$ and $\tau'$ given by
      Lemma~\ref{lem-em-intersection}, the common limit point of
      $\delta$ and $\delta'$ is also a common limit point for timelike
      curves $\tau$ and $\tau'$. Hence,  going
      forwards in $\tau$ and backwards in $\tau'$ gives us the desired
      CTC in $\mathbf{S}$, because the starting points of
      $\tau$ and $\tau'$ are glued together in $\mathbf{M}^-$.
 \end{proof}

 \section{Higher dimensions}\label{sec-h}

The barrier $B\times \mathbb{R}^{d-2}$ works in $d$-dimension in an
analogous construction. We have that $B\times \mathbb{R}^{d-2}$ is closed
and its complement is connected because the direct product of closed
sets is closed and the direct product of connected sets is
connected. Hence removing $B\times \mathbb{R}^{d-2}$ from
$d$-dimensional time-rolled Minkowski spacetime gives an analogous
$d$-dimensional punctured time-rolled Minkowski spacetime
$\mathbf{M}^-_d$.

The set $B\times \mathbb{R}^{d-2}$ can also be constructed analogously
to $B$ by intersecting $d$-dimensional closed bars that have
trapezoids as $2$-dimensional cross sections. Hence that no broken
future-directed causal curve can cross $[0,1]\times \mathbb{R}^{d-1}$
without intersecting $B\times \mathbb{R}^{d-2}$ can be proven the same
way as Proposition~\ref{prop-blocking}. The only difference is that,
instead of a point, the intersection of the corresponding nested bars
give a horizontal $d-1$-dimensional subspace\footnote{Line, plane,
etc.\ depending on the dimension $d$.}, which is contained in $B\times
\mathbb{R}^{d-2}$ and crosses the causal curve trying to go through
region $[0,1]\times\mathbb{R}^{d-2}$.

Let us note that if we intersect $B\times \mathbb{R}^{d-2}$ with any plane
parallel to the plane $\{(t,x,0,\dots,0)\in \mathbb{R}^d: t,x\in
\mathbb{R}\}$, we get back $B$ in these $2$-dimensional vertical
cross sections.  Hence, even though in this higher dimensional
construction instead of eventually middle points we have eventually
middle $d-1$-dimensional subspaces, every point of these eventually
middle subspaces is an eventually middle point of some $B$ from some
$2$-dimensional vertical cross section, and these eventually middle
points also satisfy Lemmas~\ref{lem-em-dense}, \ref{lem-CL} and
\ref{lem-em-intersection} generalized to $\mathbb{R}^d$ replacing $B$
with $B\times \mathbb{R}^{d-2}$.

We have that $\mathbf{M}^-_d$ is maximal among spacetimes that do
not contain CTCs because each step of the proof of
Proposition~\ref{prop-max} goes through in this modified
construction. The same way as we did in Step 1 and Step 2, we can
find a point $p$ from the extension corresponding to a removed
point. Because the $2$-dimensional vertical cross section of
$\mathbf{M}^-_d$ through this point $p$ is isomorphic to
$\mathbf{M}^-_2$, curves $\delta$, $\delta'$, $\lambda_n$'s, $\tau$
and $\tau'$ of the two $2$-dimensional construction exist also in
$\mathbf{M}^-_d$. So, since Lemma~\ref{ext-lem} works in any
dimension, we can repeat the same proof with these curves and find the
CTC we are searching for.\\

\noindent {\bf Author Contributions} All authors wrote and reviewed the manuscript.\\

\noindent {\bf Data Availability} No datasets were generated or analysed during the current study.\\

\noindent {\Large {\bf Declarations}}\\

 \noindent {\bf  Conflict of Interest} The authors declare no conflict of interest.\\

\noindent {\bf Competing interests} The authors declare no competing interests.


%


  \medskip


  \bibliographystyle{plain}

\end{document}